\theoremstyle{remark}
\newtheorem*{remark}{Remark}
\theoremstyle{definition}
\newtheorem{proposition}{Proposition}
\title{Cyclists' Cardiac Conundrum}
\author[a]{Andrew Nugent}
\author[a]{Yi Ting Loo}
\author[a]{Jack Buckingham}
\affil[a]{Mathematics of Real World Systems II CDT, University of Warwick}
\keywords{Arrhythmia $|$ Heart rate data $|$ Spike detection $|$ Poisson process} 
\begin{abstract}
Arrhythmia is an abnormality of the heart's rhythm, caused by problems in the conductive system and resulting in irregular heartbeats. There is increasing evidence that undertaking frequent endurance sports training elevates one’s risk of arrhythmia. Arrhythmia is diagnosed using an electrocardiogram (ECG) but this is not typically available to athletes while exercising. Previous research by Crickles investigates the usefulness of commonly available heart rate data in detecting signs of arrhythmia. It is hypothesised that a feature termed `gappiness', defined by jumps in the heart rate while the athlete is under exertion, may be a characteristic of arrhythmia. A correlation was found between the proportion of `gappy' activities and survey responses about heart rhythm problems. We develop on this measure by exploring various methods to detect spikes in heart rate data, allowing us to describe the extent of irregularity in an activity via the rate of spikes. We first compare the performance of these methods on simulated data, where we find that smoothing using a moving average and setting a constant threshold on the residuals is most effective. This method was then implemented on real data provided by Crickles from 168 athletes, where no significant correlation was found between the spike rates and survey responses. However, when considering only those spikes that occur above a heart rate of 160 beats per minute (bpm) a significant correlation was found. This supports the hypothesis that jumps at only high heart rates are informative of arrhythmia and indicates the need for further research into better measures to characterise features of heart rate data. 

\end{abstract}
\begin{document}

\maketitle
\thispagestyle{firststyle}
\ifthenelse{\boolean{shortarticle}}{\ifthenelse{\boolean{singlecolumn}}{\abscontentformatted}{\abscontent}}{}

\section{Introduction} \label{sec: introduction}

Arrhythmia is an abnormality of the heart's rhythm, the most common type of which is atrial fibrillation (AF) \cite{NHSarrhythmia}. A diagnosis is confirmed by observing an episode of AF during an electrocardiogram (ECG) \cite{developed2010guidelines}, which can be aided by patients wearing a portable ECG recorder \cite{NHSarrhythmia,camm2012usefulness}. While moderate physical exercise is associated with a reduced risk of AF there is some evidence that strenuous endurance exercise increases an individual's risk \cite{aizer2009relation,centurion2019association}. As such there has been an interest in attempts to use ECGs in smart watches, worn by athletes to record information about their exercise, to detect AF \cite{mcmanus2013novel,li2019current}. As these devices with ECGs can be expensive, it could be beneficial to instead determine if commonly available heart rate data contains any informative signs of AF. 

The Crickles project \cite{cricklesWebsite} was established to provide athletes with cardiac stress analysis, calculated using data from their Strava activities, and inform them about their training load relative to other athletes. This data has been used by Ian Green and Mark Dayer to investigate the possibility of using heart rate data to detect the presence of heart rhythm problems \cite{cricklesPaper}. They hypothesised that jumps at high heart rates, for example jumping from 180 to 195 beats per minute (bpm) then immediately dropping again, or jumps to unrealistically high values, could be informative. As an athlete's heart rate monitor continuously samples current in the heart muscle it may plausibly be affected by arrhythmia. Green \textit{et al.} do not claim that such jumps are a direct observation of an episode of AF, nor that they are an accurate reflection of an athlete's heart rate, but rather that these observed jumps are errors in the heart rate monitor. Their key hypothesis is that the presence and/or frequency of such errors may still be informative.

In \cite{cricklesPaper} each activity is classified as `Regular', `Unclear', `Check\_Strap' or `Irregular', based on two key features. The feature of most interest is `gappiness' shown in Fig.\ref{fig:gappy heart rate}: an activity is gappy if it contains gaps in the range of observed heart rate readings over a threshold.
\begin{figure}[ht]
    \centering
    \includegraphics[width=\linewidth]{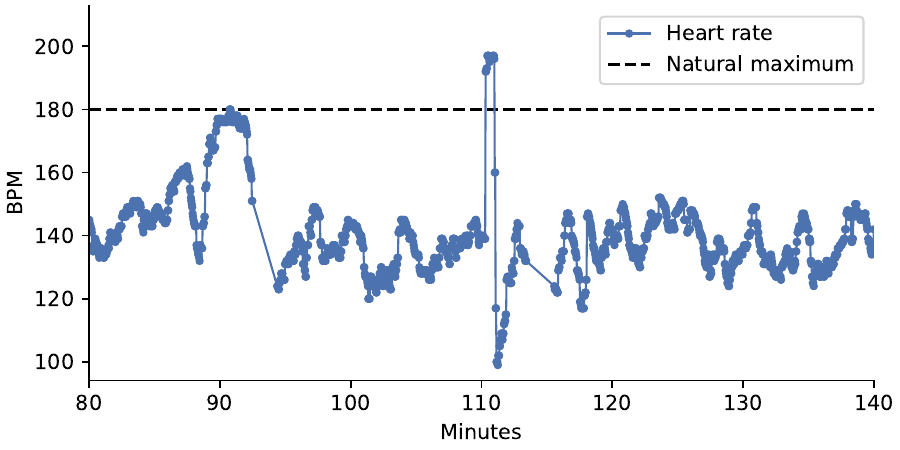}
    \caption{Time series showing a gappy heart rate, as the heart rate exceeds the natural maximum (dashed line). It is hypothesised that this gap are blips in the heart rate monitor caused by unusual electrical cardiac signals, and so a heightened frequency of these gaps for an athlete may be associated with arrhythmia.}
    \label{fig:gappy heart rate}
\end{figure}
This threshold is set as the maximum of the modal heart rate and lactate threshold heart rate (LTHR) of the athlete, which serves as an indication of when the athlete's heart rate is considered to be at a high range. Gappiness can be detected by marking heart rates as either `visited' or `unvisited' over the course of an activity. If there exists a visited heart rate higher than an unvisited heart rate and the threshold then a gap is observed. As the procedure for detecting gappiness uses the distribution of heart rates across an entire time series it provides a simple Boolean indicator of whether the whole activity is gappy or not. Activities can also be gappy at the bottom end of heart rate readings, using a mirror of the procedure above, but it is assumed that these gaps are not indicative of arrhythmia as the athlete is not under exertion. By design gappiness focuses on jumps that occur at the top end of the heart rate, potentially neglecting any informative behaviour that occurs when the heart rate is normal or relatively low.

\begin{figure}[ht]
    \centering
    \includegraphics[width=\linewidth]{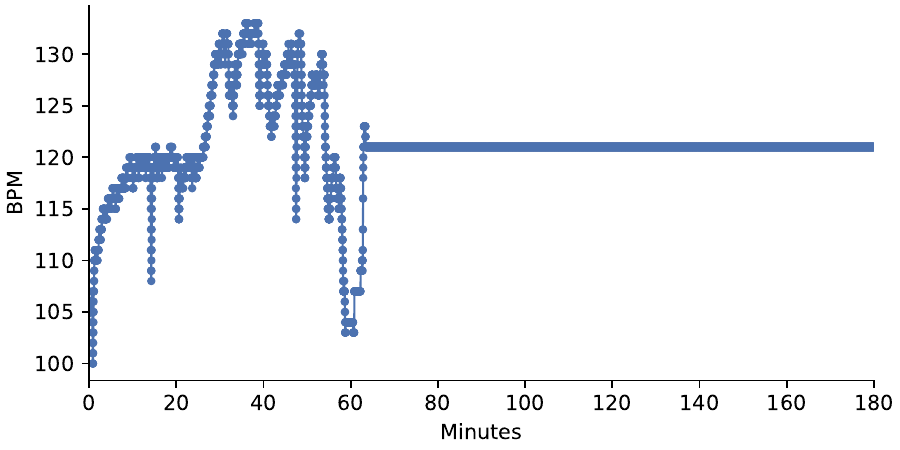}
    \caption{Time series showing a sticky heart rate, where the heart rate is held constant at 120 bpm for a long period of time which is physiologically implausible especially during endurance training, suggesting an error in the heart rate monitor.}
    \label{fig:sticky heart rate}
\end{figure}

For the second feature, an activity is called `sticky' if the heart rate is constant for at least 90 seconds, as seen in Fig.\ref{fig:sticky heart rate}. This indicates that the monitor is not correctly sampling the heart rate. Activities are then classified according to Table \ref{tab:activity classification}.

\begin{table}[H]
    \centering
    \caption{Classification of activities}
    \begin{tabular}{|c|c|c|c|}
        \cline{3-4}
        \multicolumn{2}{c|}{} & Non-sticky & Sticky \\ \hline
        \multirow{3}{3em}{Gappy} & High end only & \cellcolor[HTML]{ff9999} Irregular & \cellcolor[HTML]{e1e1ea} \\ \cline{2-2}
        & Low end only & \cellcolor[HTML]{e1e1ea} Unclear & \cellcolor[HTML]{e1e1ea} \\  \cline{2-2}
        & At both ends & \cellcolor[HTML]{e1e1ea} Unclear & \cellcolor[HTML]{e1e1ea} \\  \cline{1-2}
        \multicolumn{2}{|c|}{Non-gappy} & \cellcolor[HTML]{adebad} Regular & \multirow{-4}{*}{\cellcolor[HTML]{e1e1ea} Check\_Strap}\\ \hline
    \end{tabular}
    \label{tab:activity classification}
\end{table}
Activities classified as `check\_strap' or `unclear' were removed from consideration. This did not pose a problem as athletes typically recorded a large number of activities. For each athlete the \textit{irregular ratio} was defined as the percentage of remaining activities classified as irregular. 

The study in \cite{cricklesPaper} aimed to investigate the relationship between an athlete's irregular ratio and their cardiac health. Crickles users were invited to take part in a questionnaire, the key question of which was 
\begin{displayquote}
Do you have, or have you had a heart rhythm problem?
\end{displayquote}
The Yes/No response to this question was used to classify athletes into those with and without a diagnosed arrhythmia. The Crickles dataset was restricted to those athletes with a survey response. The dataset was restricted further by considering only cyclists, who were assumed to use a chest strap to monitor their heart rates. A statistically significant correlation of 0.24 was then found between the irregular ratio and responses to a heart health survey. 

Work so far on this problem describes the jumps in time series by labelling the entire activity `gappy' or `non-gappy'. We aim to expand on this by providing a more detailed characterisation of the level of irregularity for each activity/athlete, with the goal of strengthening the correlation previously observed. We begin by developing `spike detection' algorithms to locate jumps throughout the heart rate data, regardless of the range of the heart rate readings, and extract potentially important features such as the height of spikes and the heart rate at which they occur. We then infer the rate of spikes for each athlete and test for a correlation against reported heart rhythm problems. 

\section{Methods} \label{sec: methods}

We begin this section by describing a variety of methods to detect spikes in heart rate data, including smoothing, wavelet transforms and thresholding. 

We then introduce a method for simulating heart rate data. Creating simulated data was essential as privacy controls regarding data available to Crickles severely limited our own access. For exploratory analysis we had access to a small number of sample time series, but to utilise a larger dataset we needed the Crickles team to run and return any results. This restriction necessarily shaped our methodology: we first developed various spike detection methods, tuned their parameters, then compared their performance on simulated data to select a single method to apply to the real dataset. We outline our methods and comparison below, then close this section by describing the inference we perform on the detected spikes.

\subsection{Spike detection methods} \label{subsection:spike_detection_methods}

One method that could be used to detect spikes on time series data is to first detrend the time series \cite{techniques_for_detrending}, for example, by subtracting a smoothed version from the original. Then, a threshold can be set on the residuals, with times when the residuals exceed that threshold registered as spikes. The corresponding heart rate where the spikes occur can be found from the smoothed heart rate time series. The height of the spike can also be calculated by subtracting the smoothed heart rate from the original heart rate data where spikes occurred.

\subsubsection{Smoothing methods} 

In order to smooth the heart rate time series, one method was to utilize a simple moving average. Letting $\{X(t):t=1,\dots,T\}$ be the heart rate time series, define the smoothed time series, $\overline{X}(t)$ by
\begin{equation}
    \overline{X}(t) := \frac{1}{w} \sum_{k=-i}^{i} X(t+k)
\end{equation}
where $w=2i+1$ is called the width.

A more complex smoothing method finds a sparse representation in a discrete wavelet basis - a widely used technique for signal denoising \cite{intro_to_wavelets, mallat2009wavelets}. The mathematical theory of this method is explained in section \ref{SI: spike detection methods}.\ref{SI: spike detection - dwt} of the supplementary material.

\subsubsection{Thresholding methods} 
A simple thresholding method is to set a constant threshold by evaluating the $95$th percentile of the residuals, multiplied by a constant. 

Another method is to have an adaptive threshold \cite{adaptive_threshold_stackoverflow} on the residuals based on a rolling window, instead of evaluating a constant threshold from the entire set of residuals. For each time $t$ in the residual series, a window of width $w$ centred on time $t$ is considered and the threshold, $\text{Thresh}(t)$ is calculated by:

\begin{equation}
    \text{Thresh}(t) = \text{offset} + \mu(t) + \text{constant}*\sigma(t)
\end{equation}

where $\mu(t)$ and $\sigma(t)$ is the mean and standard deviation of residuals within the window respectively. The offset was introduced and set to a low value of $0.01$ to prevent any detection of spikes when there are regions of missing data in the heart rate time series. Indeed, the standard deviation $\sigma(t)$ should be zero in these regions, but fluctuations due to finite numerical precision meant that the residuals here are not exactly zero.

\subsubsection{Continuous wavelet transform}
The methods discussed so far detect spikes by thresholding the residuals after detrending the time series. An alternative approach instead uses the continuous wavelet transform (CWT). Loosely speaking, the CWT measures the similarity of the signal to the shape of a given `mother wavelet' at different timescales and locations. By thresholding these coefficients for a suitable, small timescale, we can identify locations in the time series which resemble spikes. Again, this threshold can be either constant or adaptive. The CWT is closely related to the discrete wavelet transform mentioned earlier and both transforms, along with the spike detection methods derived from them, are described in more detail in section \ref{SI: spike detection methods} of the supplementary material.

\subsection{Simulation algorithm} \label{sec: simulation algorithm}

Our simulation model is inspired by existing models for electricity prices, which are known to spike in a similar fashion \cite{mayer2015modeling,de2006nature}. These models separate the process into multiple components, each designed to capture a different feature of the data, one of which is sudden short spikes. Taking a similar approach we consider three components which are simulated in order: 
\begin{enumerate}
    \item The base heart rate ($X_t$)
    \item An additional noise process ($Y_t$)
    \item A spike process ($Z_t$)
\end{enumerate}
All variables are one-dimensional and in units of bpm.

The base heart rate provides the general shape of the time series and is where different activities, such as interval training, can be included. $X_t$ represents the `true' heart rate of the athlete. We model $X_t$ using an SDE of the form
\begin{equation} \label{eqn: base heart rate SDE}
    dX_t = -V'(X_t,t)\,dt \,+\, \sigma\,dW_t^{(1)}
\end{equation}
where $V(x,t)$ is an asymmetric potential surface, $\sigma$ represents the amount of noise and $dW_t^{(1)}$ is a standard Brownian motion. $V$ may be time dependent, allowing us to simulate complex activities such as modelling interval training by moving the minimum of $V$, whereas $\sigma$ is constant. $V$ is chosen to be asymmetric as athletes' heart rates often remain close to, but do not regularly exceed, a natural ceiling rate during exercise. However, excursions into lower heart rates, in the region 100-160bpm) were not uncommon in our example data and so are penalised less by the asymmetric potential surface. The precise shape of $V$ and values of $\sigma$ are fit by eye to available data as their purpose is to provide a reasonable background process against which we can try to detect spikes. A plot of $V$ can be found in Fig.\ref{fig:potential surface}.

The additional noise process allows us to incorporate heteroscedasticity, as step changes in the level of noise are sometimes observed during an activity. An example of this in real data is shown in Fig.\ref{fig:heteroscedasticity}. 
In some cases, it is the first ten minutes or so of the activity which is noisier than the rest. A possible explanation \cite{cricklesPaper} of this phenomenon is that at the start of an activity a lack of sweat build-up leads to contact failures in heart rate monitors and hence noisy data. As this noise is considered to be an effect of the monitor, not increased volatility in the athlete's heart rate, we include this as a separate independent process.
\begin{figure}[ht]
    \centering
    \includegraphics[width=\linewidth]{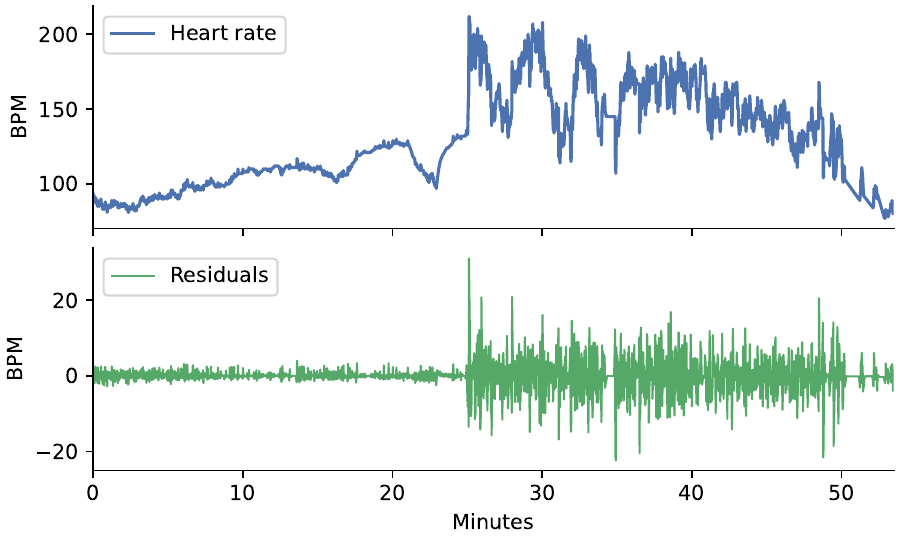}
    \caption{Time series showing heteroscedasticity in a real heart rate sample. Both the recorded heart rate (top) and residuals after removing a moving average with a ten second window (bottom) are shown. It can clearly be seen that there is a significant increase in point variance after about 25 minutes.}
    \label{fig:heteroscedasticity}
\end{figure}

The additional noise process is a simple mean-reverting Ornstein–Uhlenbeck process \cite{gardiner1985handbook}, modelled using the SDE
\begin{equation}
    dY_t = -\alpha(t)\, Y_t\,+\,\beta(t)\,dW_t^{(2)}
\end{equation}
where $\alpha>0$ describes the rate at which the process reverts to its mean at $0$, $\beta$ is a diffusion coefficient and $dW_t^{(2)}$ is a standard Brownian motion, independent of $dW_t^{(1)}$. $\alpha$ and $\beta$ are time-varying but do not depend on the state of the process, and are chosen to create additional noise at the start of the activity.

Finally, the spike process, $Z_t$, is simulated last. As the high heart rates observed during spikes are not considered to be a true reflection of the individual's heart rate \cite{cricklesPaper} it is reasonable to consider them as a separate process. Since arrhythmia-related spikes are also considered to be more likely when the heart rate is high and the heart is under strain \cite{cricklesPaper} $Z_t$ will depend upon $X_t$.

The spike process is given by the increments of a non-homogeneous jump process $J_t$. Let $r:\mathds{R}\rightarrow\mathds{R}$ be a function that describes the rate of spikes as a function of the heart rate. This function is then used to calculate the Poisson intensity
\begin{equation}
    \lambda(t) = r(X_t)
\end{equation}
which is used to simulate a non-homogeneous Poisson process $P_t$ that determines the timing of spikes. The sizes of spikes are then drawn from a log-normal distribution to give the jump process $J_t$. The increments of this jump process give the spikes $Y_t$, hence $Y_t$ is zero except when it is positive at the moment of a spike. 

The final output of the simulations is the synthetic heart rate $(H_t)$ which is found by smoothing $X_t+Y_t+Z_t$ using a moving average, then rounding the result to the nearest integer. The simulation method was examined by Crickles and found to produce realistic heart rate time series. 

We test the spike detection methods described on simulated data with various features including: interval training, step changes in the level of activity and heteroscedasticity, as well as simple situations with a constant level of activity and noise. Example simulated time series can be found in \ref{SI: example simulations}. For each time series the simulation algorithm also outputs a list of spikes containing the time and height of the spikes, as well as the underlying heart rate at the time the spike occurred. We compare this against the spikes detected by each method. 

\subsection{Comparison metrics}

To compare the performance on simulated data of the various spike detection methods described we consider two error metrics. Both use a list of true spikes taken from the simulated data and a list of detected spikes outputted by the chosen spike detection method. 

For the first error metric we set a strict time-window of 5 seconds and scan through the detected spikes. For each detected spike we count it as a `correct spike' if it can be paired with a true spike within the time-window. If so, both spikes are removed from their respective lists. After scanning through the detected spikes we calculate the precision and recall
\begin{align*}
    \text{precision} = \frac{\text{\# correct spikes}}{ \text{\# detected spikes}  } \\
    \text{recall} = \frac{\text{\# correct spikes}}{ \text{\# true spikes}  } 
\end{align*}
and from this the $F_1$ score for each activity
\begin{align*}
    F_1 \text{ score} = \frac{2}{\text{precision}^{-1} + \text{recall}^{-1}}
\end{align*}
which is then averaged over all simulated activities. The $F_1$ score is the harmonic mean of the recall and precision and is a well-established performance metric \cite{dice1945measures,sorensen1948method,chinchor1993muc}. The score takes values between 0 (worst score) and 1 (best score). 

The second error metric, which we call `spike density error' involves transforming the locations of spikes from individual points to a distribution of possible points. The motivation for this is that, although spikes peak at a single point, they do have some width, partially due to the smoothing done in real and simulated heart rate time series. For this reason there is not a single time at which the whole spike event occurs, so it seems reasonable to allow the true and detected spike times to be represented by a distribution rather than a point. This metric also allows us to incorporate the height of a spike. 

We define a kernel density $f:\mathds{R}\rightarrow\mathds{R}$, chosen here to be a Gaussian distribution centered at $0$ with standard deviation $5$ (to correspond to the earlier strict threshold). This kernel density will be translated and scaled to replace the single point value for each spike. Any alternative kernel is acceptable provided it satisfies the following conditions:
\begin{enumerate}
    \item $f$ is positive and symmetric.
    \item $f$ achieves its global maximum at $0$.    
    \item $f$ is integrable and integrates to $1$. 
\end{enumerate}
Denote by $f_s (t)$ the translated and re-normalised function
\begin{equation} \label{eqn: scaled kernel density}
    f_s (t) = \frac{f(t-s)}{\int_0^T f(\tau-s)\,d\tau}
\end{equation}
where $T$ is the length of the time series. 

Given $n$ spikes occurring at times $0\leq s_1<\dots<s_n\leq T$ with corresponding heights $h_1,\dots,h_n$, define the spike density $\rho:[0,T]\rightarrow\mathds{R^{+}}$ by
\begin{equation}
    \rho(t) = \sum_{i=1}^n h_i \, f_{s_i}(t)
\end{equation}

Given lists of real and detected spike times, denote the respective densities by $\rho_r$ and $\rho_d$. The spike density error is given by
\begin{equation} \label{eqn: spike density error}
    \epsilon = \frac{1}{T} \int_0^T \big|\rho_r(t) - \rho_d(t) \big|\,dt
\end{equation}
which provides an error in units of bmp/hour. Properties of the spike density error are discussed in \ref{SI: Spike density error}. We also compare this error against that of the null spike detection model, in which no spikes are ever detected. The error for the null model is simply the integral of the true spike density, so we define the error impact
\begin{align}
    \epsilon^{*} 
    &= \frac{1}{T}\int_0^T \rho_r(t) - \big|\rho_r(t) - \rho_d(t) \big| \,dt 
\end{align}
which again has units bpm/hour. If $\epsilon^{*}$ is negative this means that overall spikes have been detected erroneously and it would have been more accurate to detect no spikes. If $\epsilon^{*}$ is positive this means the spike detection method has reduced the error (i.e. had a positive impact) by correctly identifying spikes, hence a large $\epsilon^{*}$ is indicative of successful spike detection. 

\subsection{Results of comparison}

To compare the spike detection methods that were described in subsection \ref{sec: methods}.\ref{subsection:spike_detection_methods}, 45 training sets of heart rate time series of different activity types (as listed in the final paragraph of subsection \ref{sec: methods}.\ref{sec: simulation algorithm}) were simulated. This training data was used to tune all the relevant parameters in the spike detection methods, for example the width for moving average and the constant multipliers in the thresholding algorithms. A grid search of parameters was done such that each spike detection method maximizes the F1 score and the error impact with the chosen parameter.

Once all the parameters were tuned such that all spike detection methods were achieving their best scores on the set of training data, they were then applied to the test data which contains 900 simulated heart rate time series, similarly of various activity types. From the results of the spike detection, the average F1 scores and average error impact across all simulated activities were evaluated, as shown in Fig.\ref{fig:average F1 score and error impact of all activities}. Furthermore, the average F1 score and average error impact for each subset of simulated activity types were also evaluated for a more detailed analysis.

\begin{figure}[ht]
    \centering
    \includegraphics[width=\linewidth]{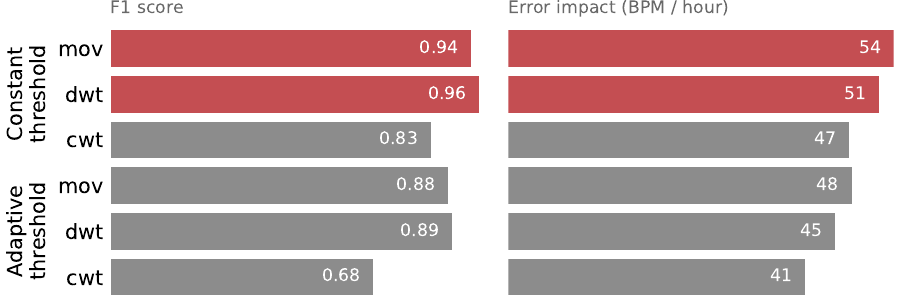}
    \caption{
        Bar charts comparing the F1 score (left) and error impact (right) when averaged across all simulated activities. Both metrics show the methods based on moving average (\texttt{mov}) and DWT smoothing with a constant threshold to be the best performing (both highlighted in red). Of the two, the moving average method was chosen for spike detection on empirical data because of its simplicity.
    }
    \label{fig:average F1 score and error impact of all activities}
\end{figure}

\begin{figure*}[ht!]
    \centering
    \includegraphics[width=.6\linewidth]{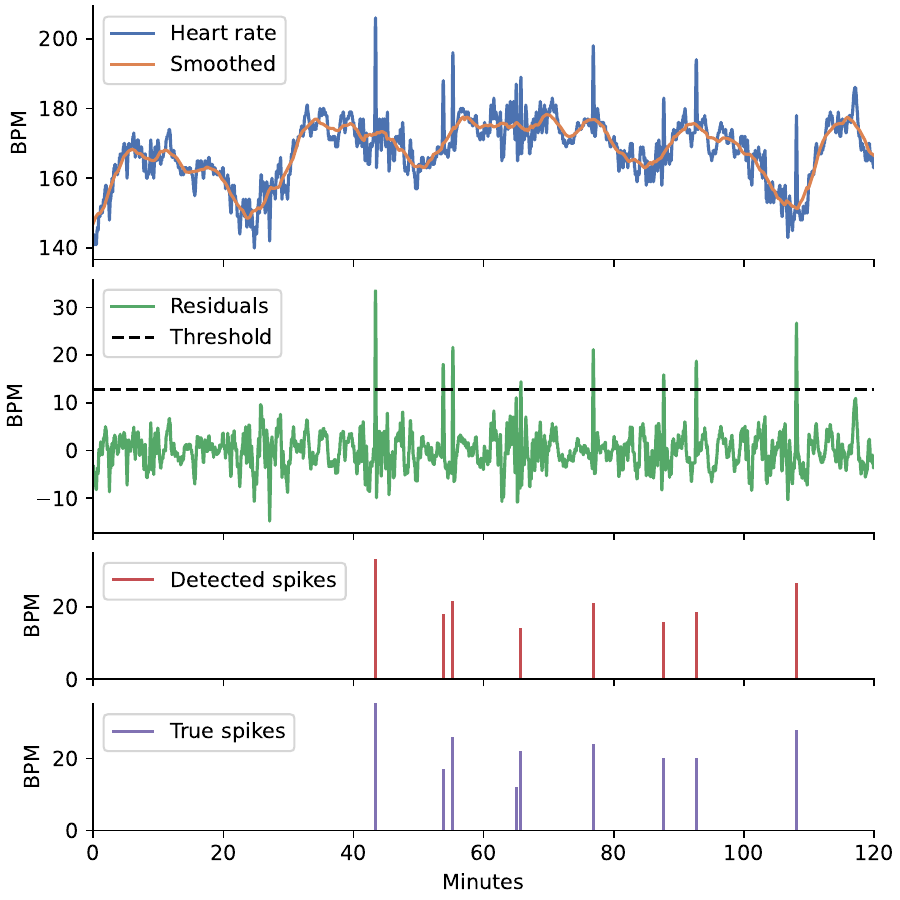}
    \caption{Example of spike detection on simulated heart rate data with method mov\_constant from Fig.\ref{fig:average F1 score and error impact of all activities}. The top plot shows the simulated time series being smoothed by moving average (orange). The second plot shows the resulting residuals (green) and the evaluated constant threshold (dashed). The third and fourth plots are the resulting detected spikes and true simulated spikes respectively, which shows the accuracy of this method in recovering the true spikes.}
    \label{fig:spike detection with moving average and constant threshold example}
\end{figure*}

From Fig.\ref{fig:average F1 score and error impact of all activities}, both moving average and DWT smoothing methods with constant threshold were able to detect the true simulated spikes accurately regardless of activity type, showing similar results in both performance metrics. As the moving average method was able to detect a more accurate spike height, this resulted in a greater error impact. This is illustrated in Fig.\ref{fig:spike detection with moving average and constant threshold example}, which shows a breakdown of this spike detection method on an example simulated heart rate time series. From these results, and because of its relative simplicity, the method with moving average and constant threshold was chosen as the spike detection method to be used going forward.

This spike detection method, with the same set of parameters as used for the comparison, i.e., width $w=200\,\mathrm{s}$ for the moving average smoothing and constant multiplier $2.5$ for thresholding, were then applied to the empirical data of 168 athletes provided by Crickles.

\subsection{Inference} \label{sec: inference}

Once this method has been applied to the empirical data it remains to characterise the frequency of spikes for each athlete. We consider two key lines of enquiry:
\begin{enumerate}
    \item Is there a correlation between the overall frequency of spikes and reports of heart rhythm problems?
    \item Is there a correlation between the frequency of spikes and the heart rate at which they occur? Furthermore, is this relationship the same for individuals with and without heart rhythm problems?
\end{enumerate}

To address the first question we calculate the rate of spike events. We denote the total number of spikes across all an individual's time series by $N$, and the total length of time series by $T$. Define the spike frequency
\begin{equation}
    \lambda = \frac{N}{T}
\end{equation}
which is in fact the maximum likelihood estimate (MLE) if the cumulative number of spike events were a Poisson process with constant rate $\lambda$ (see \ref{SI: Poisson}). This value is calculated for all individuals. As in \cite{cricklesPaper} we then perform a point-biserial correlation check and logistic regression to explore the association between spike rate and heart rhythm problems. 

To address our second line of enquiry, the relationship between spike frequency and heart rate, we consider two approaches. Firstly we consider a more detailed model of the spike frequency. We again assume spikes occur at times determined by a Poisson process, but now allow this process to be in-homogeneous with time-dependent rate $\lambda(t)$. Letting $\overline{X}(t)$ denote a smoothed heart rate time series with spikes removed, we assume $\lambda(t)$ is a piecewise constant function of $\overline{X}(t)$. Given these two models we consider the hypotheses:
\begin{itemize}
    \item $H_0$: spikes occur according to a Poisson process with constant intensity.
    
    \item $H_1$: spikes occur according to a Poisson process whose intensity is a piecewise constant function of the heart rate. 
    
\end{itemize}
We perform a likelihood ratio test to determine if the more complex, heart rate dependent model is a significantly better fit to the observed spike locations. The full details of this hypothesis test are described in \ref{SI: Poisson}. This hypothesis test is performed for each individual and for the athletes collectively. If the null hypothesis were rejected this would indicate there exists a relationship between heart rate and spike rate that requires further investigation. It is also possible that a relationship between inhomogeneity and arrhythmia may be observed. 

We also considered fitting the rate of an inhomogeneous Poisson process whose rate is a continuous function of time or heart rate, but the number of spikes observed in the sample data was too low for such a method to produce meaningful results. 

\section{Results and Discussion} \label{sec: results and discussions}

Spike detection and inference were performed on data from 168 athletes, all of whom had at least 100 activities of different lengths recorded, while 45 of them reported having heart rhythm problems in the Crickles survey. Activities shorter than 5 minutes and those marked as `sticky' were removed from the dataset, leaving a total of approximately 125,000 activities. On average there were 814 hours of activity data for each athlete. 

\subsection{Initial results} \label{sec: initial results}

Fig.\ref{fig:initial box plot} shows on the left a box plot of the Poisson rate of spikes for each athlete, split into those athletes with and without arrhythmia. On the right is a corresponding box plot for the percentage of activates marked as irregular, called the irregular ratio, which is the original measure of gappiness in \cite{cricklesPaper}.

\begin{figure}[ht]
    \centering
    \begin{subfigure}[b]{0.49\linewidth}
        \centering
        \includegraphics[width=\linewidth]{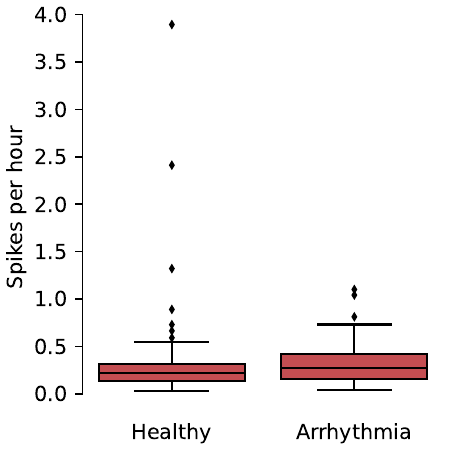}
        \caption{Poisson spike rate}
        \label{fig:initial box plot - poisson rate}
    \end{subfigure}
    \begin{subfigure}[b]{0.49\linewidth}
        \centering
        \includegraphics[width=\linewidth]{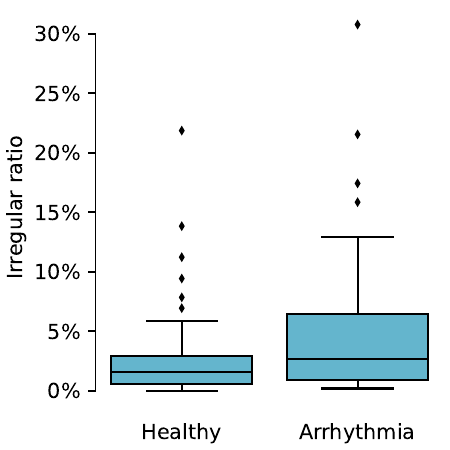}
        \caption{Irregular ratio}
        \label{fig:initial box plot - irregular ratio}
    \end{subfigure}
    \caption{These box plots show the distribution of Poisson spike rates (\subref{fig:initial box plot - poisson rate}) and irregular ratios (\subref{fig:initial box plot - irregular ratio}) for athletes with and without arrhythmia. Some outliers for both Poisson spike rates and irregular ratios can clearly be observed and will be analysed in subsection \ref{sec: results and discussions}.\ref{sec: analysis of outliers}. A correlation with arrhythmia was found from the irregularity ratio, but not for the Poisson spike rates.}
    \label{fig:initial box plot}
\end{figure}

The majority of athletes have a very low rate of spikes, typically fewer than 0.5 spikes per hour. A point-biserial correlation check is performed to asses the correlation between Poisson spike rate and arrhythmia. We also show results of the same test performed on the irregular ratio. 
\begin{table}[H]
    \centering
    \caption{Results of point-biserial correlation check for Poisson spike rate and irregular ratio against cardiac health survey.}
    \begin{tabular}{|c|c|c|}
        \cline{2-3}
        \multicolumn{1}{c|}{} & Correlation & p-value \\
        \hline
        Poisson spike rate & 0.06 & 0.44 \\
        \hline
        Irregular ratio & 0.28 & 0.00027 \\
        \hline
    \end{tabular}
    \label{tab:my_label}
\end{table}
From these values, we cannot conclude a statistically significant correlation between the Poisson spike rate and the reported heart rhythm problems. Conversely, the irregular ratio does shows a significant correlation. We now address several possible causes of this difference.

\subsection{Analysis of outliers} \label{sec: analysis of outliers}

In Fig.\ref{fig:initial box plot - poisson rate} we observe a small number of individuals without arrhythmia for whom the Poisson spike rate is substantially higher than expected. For the purpose of investigating the effect they have on the correlation, outlying athletes can be removed from the dataset by removing those above a threshold percentile of the data. To ensure we compare the Poisson spike rates and irregularity ratios of the same set of athletes we require that athletes are not outliers in either measure. The effect on correlation and p-values of removing outliers is shown in Fig.\ref{fig:effect of outliers}.

\begin{figure}[ht]
    \centering
    \includegraphics[width=\linewidth]{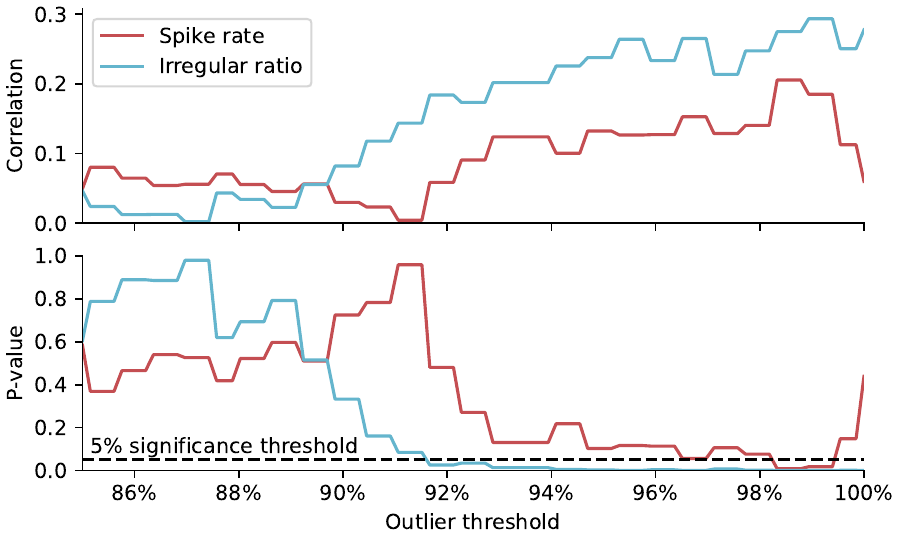}
    \caption{The plots show the correlation coefficient (top) and corresponding p-values (bottom) of Poisson spike rate (red) and irregular ratio (blue) with arrhythmia, against the outlier threshold, where we remove datasets with either of Poisson spike rate or irregular ratio above the threshold percentile of the data. When those above the 98th percentile are removed, the Poisson spike rate shows a significant correlation, indicating that it is sensitive to some outliers present in the data.}
    \label{fig:effect of outliers}
\end{figure}

The correlation between the irregularity ratio and reported heart rhythm problems gradually weakens, with an increasing p-value, as more athletes are removed from the dataset. When considering the Poisson spike rate we observe that removing a small number of outlying athletes (those above the 98th percentile) strengthens the correlation and decreases the p-value, indicating a significant result. However, removing additional athletes then reduces the correlation and significance. This indicates that the correlation between Poisson spike rate and reported heart rhythm problems is highly sensitive to the presence of these outliers. 

Unfortunately, due to our limited access to the data, we were not able to identify the cause of these outliers. It is possible that these athletes' activity time series contain some feature that was not represented in our samples that caused our spike detection method to behave unexpectedly. As the Poisson rate is calculated using spikes across all an athletes activities, errors arising in one activity have the capacity to corrupt the overall rate. This sensitivity to individual activities may contribute to these outlying results. By comparison the Boolean classification of `irregular' activities for the gappiness metric in \cite{cricklesPaper} limits the influence of any one activity on the athlete's irregular ratio, leading to greater robustness at the cost of reduced information. 

Alternatively it may be that the spike detection method has performed as expected, but that outliers have a higher rate of spikes due to a fault in their heart rate monitors or other non-arrhythmia related effect. The precise mechanism by which irregular heart rhythms translates to gappy time series is unclear, and the spikes we observe have other possible causes. Hence, even if the spike detection methods operate correctly, the inferred rate of spikes still may not correlate clearly with arrhythmia. 

Finally it may possibly be the case that outlying individuals have an un-diagnosed heart condition, meaning that the survey results do not accurately correspond to true heart rhythm problems. Additionally there may be other individuals who responded as having had heart rhythm problems but who are now receiving treatments. As the methods proposed here are not intended as a diagnostic tool we do not speculate as to the heart health of any of the athletes whose data we analyse.

As the cause of these outliers is unclear we would not feel justified in excluding them from our analysis, and so recognise that the current method of spike detection and inference does not improve upon the original gappiness metric proposed in \cite{cricklesPaper}.

\subsection{Analysis of heart rate dependence} \label{sec: analysis of hr dependence}

For the in-homogeneity hypothesis test of spike rates for all athletes, as explained in section \ref{sec: methods}.\ref{sec: inference}, the results show that for 123 athletes out of the 168, there was significant evidence to reject the null hypothesis, suggesting that spikes occur according to a Poisson process with a rate which is heart rate dependent. A histogram of the frequency of spikes against the heart rate where the spikes occur is shown in Fig.\ref{fig:distribution of spikes in hr}, which shows that a large number of spikes were detected in the range $105-140$bpm.

From the paper \cite{cricklesPaper} by Crickles, it is also hypothesized that only features of heart rate data at high ranges above the lactate threshold heart rate (LTHR) of the athlete are informative of arrhythmia, since the athlete is under exertion at that range of heart rate. Hence, to test our results against this hypothesis, spikes below a certain heart rate threshold $h_{\text{thresh}}$ are removed for all athletes. The corresponding Poisson spike rate for each athlete is then recalculated, and similarly, the point-biserial correlation between the new Poisson spike rate and arrhythmia is evaluated. The plot in Fig.\ref{fig:heart rate dependence correlation plot} shows the correlation coefficients and p-values against increasing heart rate threshold, $h_{\text{thresh}}$.

In the initial section of the plot in the range $h_\text{thresh}\leq 75$, both the correlation coefficient and p-value stay relatively constant at the value prior to filtering out any spikes. This is because there are only a very small number of spikes present in the low range of the heart rate, so the correlation tests produce a similar result to that found in subsection \ref{sec: results and discussions}.\ref{sec: initial results}.

\begin{figure}[ht]
    \centering
    \includegraphics[width=\linewidth]{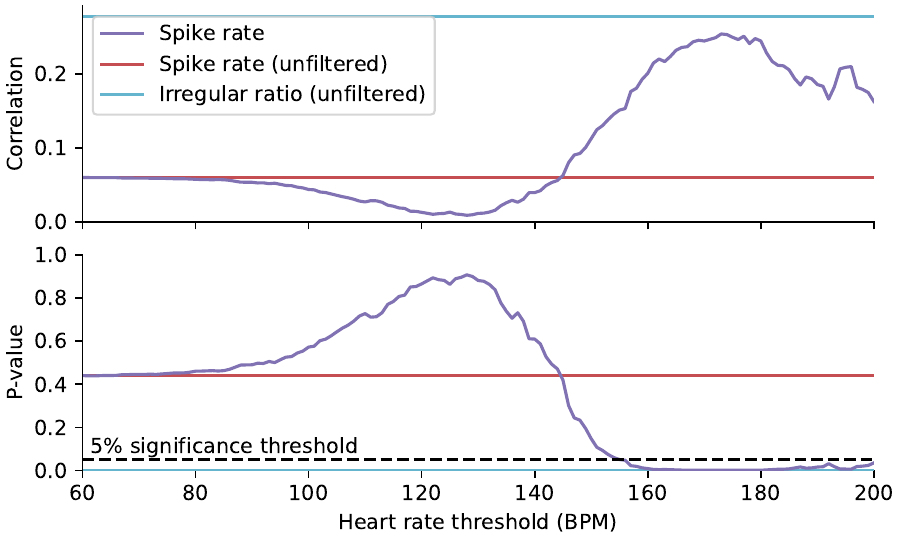}
    \caption{Correlation with arrhythmia against heart rate threshold, where the top figure shows the correlation coefficient and bottom figure are the corresponding p-values. The horizontal blue and red lines display the results of irregularity ratio from Crickles and Poisson spike rate without filtering out any spikes respectively. The purple curve shows the Poisson spike rate with spikes below the heart rate threshold being removed.}
    \label{fig:heart rate dependence correlation plot}
\end{figure}

Within the heart rate threshold range $75\leq h_\text{thresh}\leq 130$, there is a decrease in correlation between Poisson spike rate and arrhythmia, which is caused by it being a gray area where this heart rate range is considered as high for some athletes and low for others. From the histogram in Fig.\ref{fig:distribution of spikes in hr}, the modal value of heart rate where spikes occur also falls in this range. Therefore, past this range, a large number of spikes would be removed relatively randomly from all athletes, so affecting the significance of the spikes detected and thus reducing the correlation with arrhythmia.

Within the range $130\leq h_\text{thresh} \leq 175$, however, a steady increase in correlation coefficient and decrease in p-value is observed. The correlation of spike rate with arrhythmia peaks at $h_\text{thresh}=175$. In this range, the heart rate is likely considered to be high for all athletes. Since now only spikes at high heart rates remain, the increase in correlation justifies that only spikes at a higher heart rate range are significantly indicative of reports of heart rhythm problems. This not only agrees with the hypothesis as mentioned above, but also fits with the medical understanding of arrhythmia that athletes under exertion may induce irregularity patterns in heart rate \cite{exercise_induced_arrhythmia}. The correlation value with arrhythmia using Poisson spike rate never achieves a value as high as using Crickles' irregularity ratio, since only a global heart rate spike threshold was used here, instead of an individualised LTHR for each athlete, as was done in Crickles' research. Hence, further research to obtain LTHR data and implement this into the spike detection methods would likely further improve the correlation of Poisson spike rates with arrhythmia. 

Finally when $h_\text{thresh}\geq 175$, both correlation and p-values decrease as at this point most spikes have been removed. More heart rate data with spikes at such high heart rates would be needed to obtain a better correlation and draw a meaningful conclusion.

\section{Further Work} \label{sec: further work}

The first step of further work on this project would be as mentioned in the second last paragraph of section \ref{sec: analysis of hr dependence}, where the LTHR for each athlete should be incorporated in the spike detection methods as a lower bound so that only spikes at heart rate ranges considered high for each individual athlete are considered. This would require access to further historical activity data from the athlete for an accurate estimate of the athlete's LTHR, which will likely be a challenge due to data privacy.

With the current spike detection method, it is also observed that a large proportion of spikes were detected from the first $10$ minutes of the activity data, shown in the histogram in Fig.\ref{fig:distribution of spike times}. This may be largely caused by the defects of heart rate monitors that struggle to obtain accurate readings when athletes start their activities. For example, chest straps are prone to contact failures with the skin before the build up of sweat, resulting in inaccurate and noisy heart rate readings. More analysis will need to be done on coping with noisy data from the heart rate monitor such as this, so that the spike detection methods can properly distinguish noise caused by external effects from spikes caused by heart rhythm irregularities. To cope with initial noisy readings, one method will be to discard readings from the first $5$ to $10$ minutes of activity before implementing the spike detection methods.

Instead of explicitly detecting spikes in heart rate data, further work can also be done to characterise other features of heart rate time series, and their correlation with arrhythmia. It is observed from empirical data that the volatility of heart rate varies not only between separate activities but also within heart rate data recorded by an athlete from a single activity. Hence, analysing and measuring the volatility of heart rate data could also potentially be informative of heart rhythm problems. 

In addition to heart rate measurements, sport devices often record other data such as the velocity, distance, elevation, power output by the athlete, etc., during the activity. Analysis on this additional data could perhaps be used to filter out any characteristics in the heart rate data such as change in volatility or drastic trends in heart rate that are caused by a sudden exertion from the athlete while exercising.

In conclusion, other than exploring different data driven methods, there is scope for lots of further work to be done in improving the current spike detection method and modelling other features of the of heart rate time series, in order to obtain stronger correlations of these metrics with arrhythmia.

\newpage

\acknow{The authors would like to thank professor Colm Connaughton for his excellent supervision, and Ian Green for his invaluable insights and tireless running of our experiments on the Crickles data. This work was supported by the Engineering and Physical Sciences Research Council through the Mathematics of Systems II Centre for Doctoral Training at the University of Warwick (reference EP/S022244/1).}

{\showacknow{}} 

\section*{Bibliography}
\addcontentsline{toc}{section}{Bibliography}
\bibliography{bibliography.bib}

\begin{thebibliography}{10}

\bibitem{NHSarrhythmia}
NHS (2021) Arrhythmia (\url{https://www.nhs.uk/conditions/arrhythmia/}).
\newblock Accessed: 03/06/2021.

\bibitem{developed2010guidelines}
with the Special Contribution of~the European Heart Rhythm Association~(EHRA)
  D, et~al. (2010) Guidelines for the management of atrial fibrillation: the
  task force for the management of atrial fibrillation of the european society
  of cardiology (esc).
\newblock {\em European heart journal} 31(19):2369--2429.

\bibitem{camm2012usefulness}
Camm AJ, Corbucci G, Padeletti L (2012) Usefulness of continuous
  electrocardiographic monitoring for atrial fibrillation.
\newblock {\em The American journal of cardiology} 110(2):270--276.

\bibitem{aizer2009relation}
Aizer A, et~al. (2009) Relation of vigorous exercise to risk of atrial
  fibrillation.
\newblock {\em The American journal of cardiology} 103(11):1572--1577.

\bibitem{centurion2019association}
Centuri{\'o}n OA, et~al. (2019) The association between atrial fibrillation and
  endurance physical activity: how much is too much?
\newblock {\em Journal of Atrial Fibrillation} 12(3).

\bibitem{mcmanus2013novel}
McManus DD, et~al. (2013) A novel application for the detection of an irregular
  pulse using an iphone 4s in patients with atrial fibrillation.
\newblock {\em Heart Rhythm} 10(3):315--319.

\bibitem{li2019current}
Li KHC, et~al. (2019) The current state of mobile phone apps for monitoring
  heart rate, heart rate variability, and atrial fibrillation: narrative
  review.
\newblock {\em JMIR mHealth and uHealth} 7(2):e11606.

\bibitem{cricklesWebsite}
(year?) Crickles (\url{https://crickles.casa/}).
\newblock Accessed: 03/06/2021.

\bibitem{cricklesPaper}
Green I, Dayer M (2021) Detecting arrhythmia from exercise data.
\newblock {\em Unpublished}.

\bibitem{techniques_for_detrending}
Mills TC (2003) ‘classical’ techniques of modelling trends and cycles in
  {\em Modelling Trends and Cycles in Economic Time Series}.
\newblock (Palgrave Texts in Econometrics. Palgrave Macmillan, London).

\bibitem{intro_to_wavelets}
Graps A (1995) An introduction to wavelets.
\newblock {\em IEEE Computational Science and Engineering} 2(2).

\bibitem{mallat2009wavelets}
Mallat S (2009) {\em A Wavelet Tour of Signal Processing}.
\newblock (Academic Press), Third edition.

\bibitem{adaptive_threshold_stackoverflow}
Van~Brakel JPG (2014) Robust peak detection algorithm using z-scores
  (\url{https://stackoverflow.com/questions/22583391/peak-signal-detection-in-realtime-timeseries-data/22640362#22640362}).
\newblock Accessed: 26/05/2021.

\bibitem{mayer2015modeling}
Mayer K, Schmid T, Weber F (2015) Modeling electricity spot prices: combining
  mean reversion, spikes, and stochastic volatility.
\newblock {\em The European Journal of Finance} 21(4):292--315.

\bibitem{de2006nature}
De~Jong C (2006) The nature of power spikes: A regime-switch approach.
\newblock {\em Studies in Nonlinear Dynamics \& Econometrics} 10(3).

\bibitem{gardiner1985handbook}
Gardiner CW, , et~al. (1985) {\em Handbook of stochastic methods}.
\newblock (springer Berlin) Vol.{}~3.

\bibitem{dice1945measures}
Dice LR (1945) Measures of the amount of ecologic association between species.
\newblock {\em Ecology} 26(3):297--302.

\bibitem{sorensen1948method}
S{\o}rensen TJ (1948) {\em A method of establishing groups of equal amplitude
  in plant sociology based on similarity of species content and its application
  to analyses of the vegetation on Danish commons}.
\newblock (I kommission hos E. Munksgaard).

\bibitem{chinchor1993muc}
Chinchor N, Sundheim BM (1993) Muc-5 evaluation metrics in {\em Fifth Message
  Understanding Conference (MUC-5): Proceedings of a Conference Held in
  Baltimore, Maryland, August 25-27, 1993}.

\bibitem{exercise_induced_arrhythmia}
Guasch E, Mont L (2017) Diagnosis, pathophysiology, and management of
  exercise-induced arrhythmias.
\newblock {\em Nat Rev Cardiol} 14:88–101.

\bibitem{daubechies1992wavelets}
Daubechies I (1992) {\em Ten lectures on wavelets}.
\newblock (SIAM).

\bibitem{kaiser2011wavelets}
Kaiser G (2011) {\em A Friendly Guide to Wavelets}.
\newblock (Birkhäuser Boston).

\bibitem{lee2019pywavelets}
Lee GR, Gommers R, Waselewski F, Wohlfahrt K, O'Leary A (2019) Pywavelets: A
  python package for wavelet analysis.
\newblock {\em Journal of Open Source Software} 4(36):1237.

\bibitem{gao1998garotte}
Gao HY (1998) Wavelet shrinkage denoising using the non-negative garrote.
\newblock {\em Journal of Computational and Graphical Statistics}
  7(4):469--488.

\end{thebibliography}

\newpage
\appendix
\renewcommand{\thesection}{S\arabic{section}}
\renewcommand{\thefigure}{S\arabic{section}.\arabic{figure}}

\phantomsection\addcontentsline{toc}{section}{Supplementary Information}
{\Huge\textbf{Supplementary Information} }

\section{Spike detection methods} \label{SI: spike detection methods}
\setcounter{figure}{0}

\subsection{An introduction to wavelets}
What follows is a very fast paced overview of the relevant theory of wavelets for this project. For a friendlier introduction the reader is referred to the excellent texts by Daubechies \cite{daubechies1992wavelets} and Kaiser \cite{kaiser2011wavelets}. In particular, Daubechies' book formed the basis for this section.

The Fourier transform gives a decomposition of a signal into a frequency spectrum. Low frequency components describe the slower changing aspects of the signal such as the overall trend, while features like spikes require the higher frequency components. However, representation of a spike in Fourier space requires contributions from all frequencies. Indeed, the most extreme case of a spike is the Dirac delta function centered at a point \(c \in \mathbb{R}\), \(\delta_c(t)\), whose Fourier transform is \(\hat{\delta_c}(\omega) = e^{-i \omega c}\). This makes it hard to use the Fourier transform to isolate spikes within a signal and it is therefore desirable to look for a decomposition of the signal which localises `content' in both frequency and time.

Wavelet theory \cite{daubechies1992wavelets, mallat2009wavelets} covers many related transforms to perform this time-frequency decomposition, the most important being the continuous and discrete wavelet transforms. They each begin with a \emph{mother wavelet}, \(\psi \in L^2(\mathbb{R})\), such as the one in Fig.\ref{fig:mother wavelet cts}, which defines a short-lived oscillation. By translating and rescaling \(\psi\), we can generate a family of \emph{(child) wavelets}
\begin{equation}\label{eqn: child wavelet}
    \psi^{a, b}(t) = |a|^{-1/2} \psi\left(\frac{t - b}{a}\right).
\end{equation}
Here \(a \in \mathbb{R}\setminus \{0\}\) is a scaling parameter and \(b \in \mathbb{R}\) is a translation parameter. By rescaling by a factor of \(|a|^{-1/2}\) we preserve the \(L^2\) norm of the wavelet, \(\| \psi^{a, b} \| = \| \psi \|\).

\begin{figure}[htb]
    \centering
    \includegraphics[width=\linewidth]{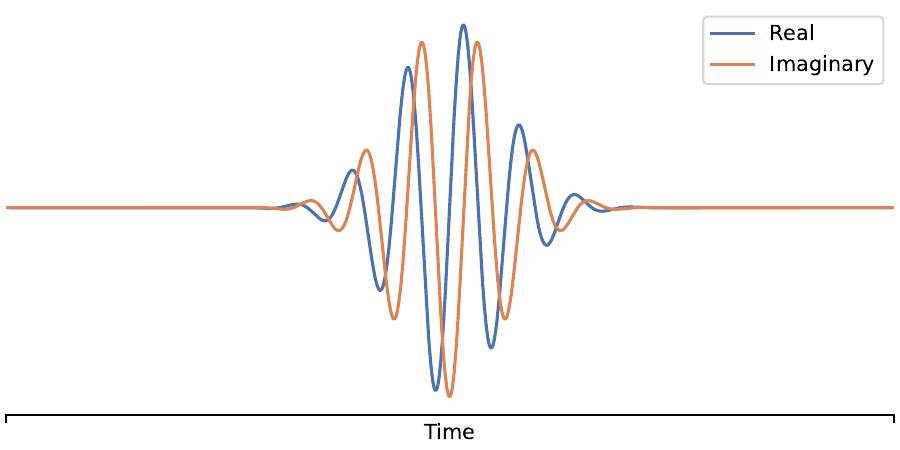}
    \caption{An example of a mother wavelet: the complex Morlet wavelet with bandwidth \(2\,\mathrm{s}^2\) and central frequency \(1 \,\mathrm{Hz}\), \(\psi(t) = \frac{1}{\sqrt{2 \pi}} e^{-t^2/2}e^{2 \pi i t}\).}
    \label{fig:mother wavelet cts}
\end{figure}

We must place some restrictions on what is an acceptable mother wavelet. It is sufficient to require that \(\psi \in L^1(\mathbb{R}) \cap L^2(\mathbb{R})\) with
\begin{equation}\label{eqn: admissible wavelet}
    \int_\mathbb{R} \psi(t) \,dt = 0,\qquad
    \|\psi\| = \int_\mathbb{R} |\psi(t)|^2 \,dt = 1.
\end{equation}
The requirement that \(\psi\) integrates to zero will allow us to invert the wavelet transforms and such a mother wavelet is called \emph{admissible}. The condition on the \(L^2\)-norm is just a normalisation convention. It is possible to weaken these restrictions so that \(\psi\) need not be in \(L^1(\mathbb{R})\) and need not integrate to 0 but is still admissible, however, this is of little practical use.

The \emph{continuous wavelet transform (CWT)} of a signal \(f(t)\) is
\begin{equation}\label{eqn: cwt}
\begin{split}
    (\mathcal{W} f) (a, b) &= \langle f, \psi^{a,b} \rangle \\
    &= |a|^{-1/2} \int_\mathbb{R} f(t) \overline{\psi}\left( \frac{t-b}{a} \right) dt .
\end{split}
\end{equation}
Like the Fourier transform, it is defined using an inner product. However, unlike with the Fourier transform, the CWT includes a lot of redundancy. This means that there is no single formula for inverting the CWT. However, a canonical formula known as the \emph{resolution of identity} does exist
\begin{equation}\label{eqn: cwt inverse}
    f(t) = C_\psi^{-1} \int_{-\infty}^\infty \int_{-\infty}^\infty (\mathcal{W}f)(a, b)\, \psi^{a, b}(t)\, \frac{1}{a^2}\, da\, db
\end{equation}
where
\begin{equation*}
    C_\psi = 2 \pi \int_{-\infty}^\infty |\hat{\psi}(\xi)|^2 \frac{1}{|\xi|} d\xi .
\end{equation*}
Here \(\hat{\psi}\) denotes the Fourier transform of the mother wavelet.
It can be shown that for an admissible wavelet (\eqref{eqn: admissible wavelet}), the integral defining the constant \(C_\psi\) exists and is finite.

In the \emph{discrete wavelet transform (DWT)}, the scaling and translation parameters only take discrete values
\begin{equation} \label{eqn: dwt discretisation}
    a_m = 2^{m},\quad b_n = n 2^m
\end{equation}
giving child wavelets
\begin{equation}
    \psi_{m, n}(t) = \frac{1}{\sqrt{2^m}} \psi\left(\frac{t}{2^m} - n\right).
\end{equation}
Therefore the DWT can be viewed as a discrete sampling of the CWT, as illustrated in Fig.\ref{fig:dwt sampling}. At higher frequencies, the DWT has higher temporal resolution to capture the faster changes but lower frequency resolution since this is more important for lower frequency components.
Note that at this point, we are still in the regime where the original signal is sampled continuously.

\begin{figure}[htb]
    \centering
    \includegraphics[width=\linewidth]{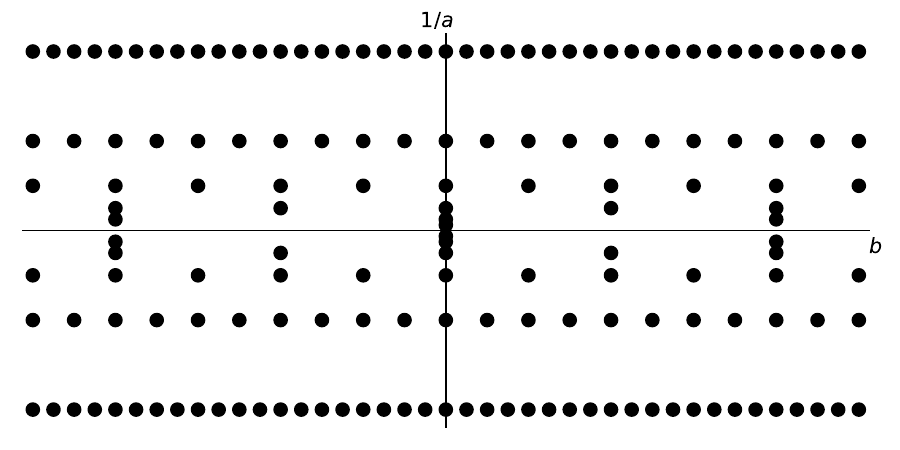}
    \caption{The discrete wavelet transform can be viewed as a discrete sampling of the continuous wavelet transform. The dots in this figure show these discrete locations. Note that the \(y\)-axis shows \(1/a\) which is proportional to the central frequency of the wavelet. The \(x\)-axis plots the translation parameter, \(b\).}
    \label{fig:dwt sampling}
\end{figure}

For particular choices of the mother wavelet, \(\psi\), the wavelets \(\psi_{m,n}\) form a basis for \(L^2(\mathbb{R})\). For more specific choices, this basis is orthonormal and the coefficients in the decomposition can be calculated easily by taking the inner product with the respective wavelets
\begin{equation}
    f(t) = \sum_{m, n} \langle f,  \psi_{m,n} \rangle \psi_{m,n}.
\end{equation}
If the basis is not orthonormal then we can still calculate the coefficients using the Riesz representors, \(\tilde{\psi}_{m,n}\), of the dual basis. These are functions satisfying \( \langle \psi_{m,n}, \tilde{\psi}_{m',n'} \rangle = \delta_{m,m'} \delta_{n,n'} \), where \(\delta_{j,k}\) is the Kronecker delta. The coefficients of \(f\) in the wavelet basis are given by taking the inner product with this dual basis,
\begin{equation}
    f(t) = \sum_{m, n} \langle f,  \tilde{\psi}_{m,n} \rangle \psi_{m,n}.
\end{equation}
The choice of discretisation made in \eqref{eqn: dwt discretisation} is motivated by multiresolution analysis which will be introduced shortly, however other discretisations are possible. The subject resulting from a combination of mother wavelet and discretisation which doesn't yield a basis is that of redundant \emph{frames} and is beyond the scope of this introduction.

For a practical understanding of wavelets, it is necessary to introduce multiresolution analysis. This is by far the most common implementation for the DWT and, for example, is used by \texttt{pywavelets} \cite{lee2019pywavelets}, the library used for this project. Aside from computational efficiency, one main reason for this is that it provides a method to construct wavelets which form an orthonormal basis.

Rather than explicitly defining the mother wavelet, \(\psi\), multiresolution analysis instead begins with a nested sequence of (closed) approximation spaces
\[\cdots \subset V_2 \subset V_1 \subset V_0 \subset V_{-1} \subset V_{-2} \subset \cdots\]
satisfying
\[
\overline{\bigcup_{m \in \mathbb{Z}} V_m} = L^2(\mathbb{R}),\qquad
\bigcap_{m \in \mathbb{Z}} V_m = \{0\}.
\]
We further require that
\begin{align*}
    \forall n \in \mathbb{Z}, \quad f \in V_0 \;&\Leftrightarrow\; t \mapsto f(t - n) \in V_0, \\
    \forall m \in \mathbb{Z}, \quad f \in V_m \;&\Leftrightarrow\; t \mapsto f(2^m t) \in V_0.
\end{align*}
The latter condition here is what gives multiresolution analysis its name.
Finally, we require the existence of a function \(\phi \in V_0\) such that \(\{\phi_{0, n} : n \in \mathbb{Z}\}\) form an orthonormal basis for \(V_0\), where for all \(m, n \in \mathbb{Z}\), \(\phi_{m,n} = 2^{-m/2} \phi(2^{-m}t - n)\). This \(\phi\) is called the \emph{scaling function} for the multiresolution analysis.

With this construction, for \(f \in L^2(\mathbb{R})\) we can build successive approximations \(f_m \in V_m\) by projecting orthogonally onto \(V_m\). By construction, we have \(f_m \to f\) pointwise as \(m \to -\infty\).
Furthermore, it is clear that the scaling function, \(\phi\), is sufficient to reconstruct the full multiresolution analysis set-up.

We are now in a position to construct an orthonormal wavelet basis for \(L^2(\mathbb{R})\).
For each \(m \in \mathbb{Z}\), let \(W_m\) be the orthogonal complement of \(V_m\) in \(V_{m-1}\), so that
\[V_{m-1} = V_m \oplus W_m, \quad\text{and}\quad W_m \perp W_{m'} \;\text{whenever}\; m \neq m'. \]
Define
\begin{equation}
    \psi = \sum_{n \in \mathbb{Z}} g_n \phi_{-1, n} \;\in W_0 \subset V_{-1}
\end{equation}
where for each \(n \in \mathbb{Z}\), \(g_n = (-1)^n \overline{\langle \phi, \phi_{-1, -n+1} \rangle}\).
It is proved in \cite[Theorem 5.1.1]{daubechies1992wavelets} that \(\psi\) is a mother wavelet generating an orthonormal basis of wavelets \(\{\psi_{m,n} : m, n \in \mathbb{Z}\}\) for \(L^2(\mathbb{R})\) where each \(\psi_{m, n}(t) = 2^{-m/2}\psi(2^{-m}t - n)\). Further, for each \(m \in \mathbb{Z}\), \(\{\psi_{m, n} : n \in \mathbb{Z}\}\) are an orthonormal basis for the subspace \(W_m\).
Note however, that this choice of \(\psi\) is not unique.

We now describe an efficient algorithm for calculating these coefficients, which is an example of something known as a subband filtering scheme.
Suppose that we have a fine scale approximation \(f \in V_m\) of some function \(f_*\) given by orthogonal projection, and that we have calculated the coefficients \((\langle f, \phi_{m, n} \rangle \,: n \in \mathbb{Z})\) in the basis generated by the scaling function. By rescaling our scaling function, we may assume without loss of generality that \(m=0\).
Thus we have \(f = \sum_{n \in \mathbb{Z}} \langle f, \phi_{0, n} \rangle \phi_{0, n} \in V_0\).
It is our aim to replace this representation of \(f\) with one using the natural basis for \(V_0 = V_M \oplus W_M \oplus \cdots \oplus W_1\) for a given integer \(M \geq 1\) called the \emph{level} of the decomposition,
\begin{equation}
    f = \sum_{n \in \mathbb{Z}} a_{M, n} \phi_{M, n} + \sum_{m = 1}^M \sum_{n \in \mathbb{Z}} d_{m, n} \psi_{m, n}.
\end{equation}
The coefficients \(a_{M, n}\) on the scaling function part of the basis will be called the \emph{approximation coefficients}, while the coefficients \(d_{m, n}\) on the wavelet part of the basis will be called the \emph{detail coefficients}.

We have already defined coefficients \(g_n\) and can define coefficients \(h_n\) such that
\begin{equation}
    g_n = \langle \psi, \phi_{-1, n} \rangle,\quad
    h_n = \langle \phi, \phi_{-1, n} \rangle.
\end{equation}
It is shown in \cite[section 5.6]{daubechies1992wavelets} that for \(f \in L^2(\mathbb{R})\) and for all \(m, n \in \mathbb{Z}\),
\begin{subequations}
\label{eqn: mra recursion}
\begin{align}
    d_{m,n} = \langle f, \psi_{m,n} \rangle &= \sum_{k \in \mathbb{Z}} \overline{g_{k-2n}} \langle f, \phi_{m-1, k} \rangle, \label{eqn: mra detail recursion} \\
    a_{m,n} = \langle f, \phi_{m,n} \rangle &= \sum_{k \in \mathbb{Z}} \overline{h_{k-2n}} \langle f, \phi_{m-1, k} \rangle. \label{eqn: mra approx recursion}
\end{align}
\end{subequations}
This gives us a recursive algorithm for computing the approximation and detail coefficients of \(f\) up to any level \(M \geq 0\), beginning with the inner products we have already computed \(\{\langle f, \phi_{0, n} \rangle : n \in \mathbb{Z}\}\). Indeed, using \eqref{eqn: mra recursion} we can generate the coefficients for level \(m+1\) by convolving the approximation coefficients for level \(m\) with the respective filters \((\overline{g_{-n}})\) and \((\overline{h_{-n}})\) and discarding the odd terms in the result. This subsampling of the coefficients is referred to as \emph{decimation}. The pair of filters \((\overline{g_{-n}})\) and \((\overline{h_{-n}})\) are known in the signal processing literature as a \emph{filter bank}.

Of course, in practice we do not have a continuously sampled signal. Instead, we assume sampling at regular intervals. This actually works to our advantage. Firstly, we can perform all the convolutions very efficiently using a fast Fourier transform. Secondly, we do not need to further approximate \(f\) at the start of the multiresolution analysis. Rather, we can exactly compute the first set of approximation coefficients \(a_{0, k} = \langle f, \phi_{0, k} \rangle\) for \(k \in \mathbb{Z}\) using a convolution. In this setting, the original sampling of \(f\) can be exactly recovered no matter what level \(M\) of decomposition is used. However, length of the time series limits how large we can make \(M\). Indeed, at each step the coefficients are decimated by a factor of two and this cannot be continued forever.

\begin{figure*}[p]
    \centering
    \begin{subfigure}[b]{0.45\linewidth}
        \centering
        \includegraphics[width=\linewidth]{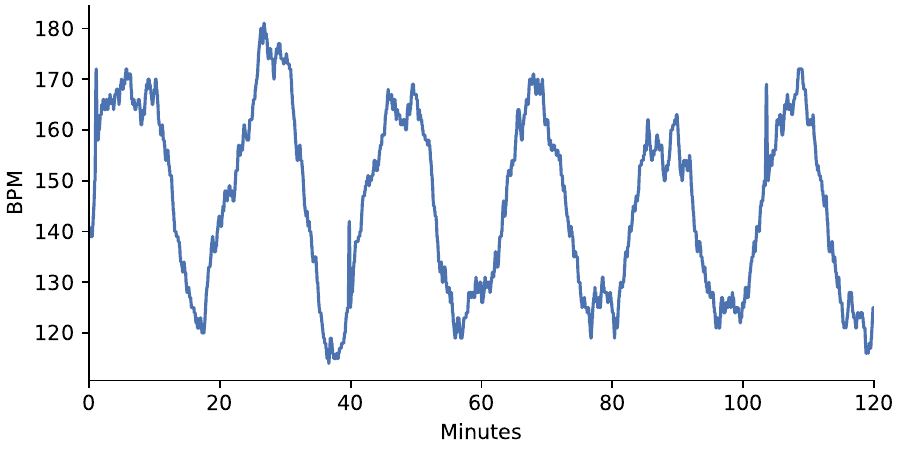}
        \caption{Original time series}
        \label{fig:dwt decomposition - original}
    \end{subfigure}
    \hspace{3em}
    \begin{subfigure}[b]{0.45\linewidth}
        \centering
        \includegraphics[width=\linewidth]{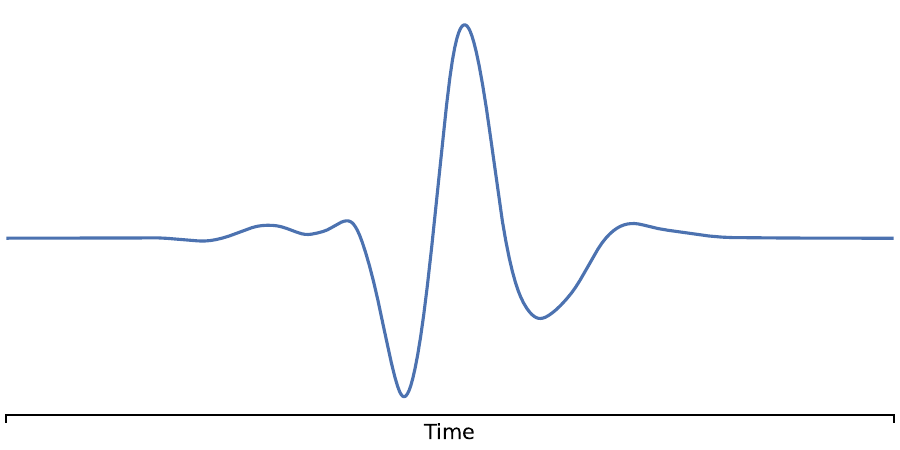}
        \caption{The symlet wavelet of order five}
        \label{fig:dwt decomposition - wavelet}
    \end{subfigure}
    \par\bigskip  
    \begin{subfigure}{\linewidth}
        \centering
        \includegraphics[width=\linewidth]{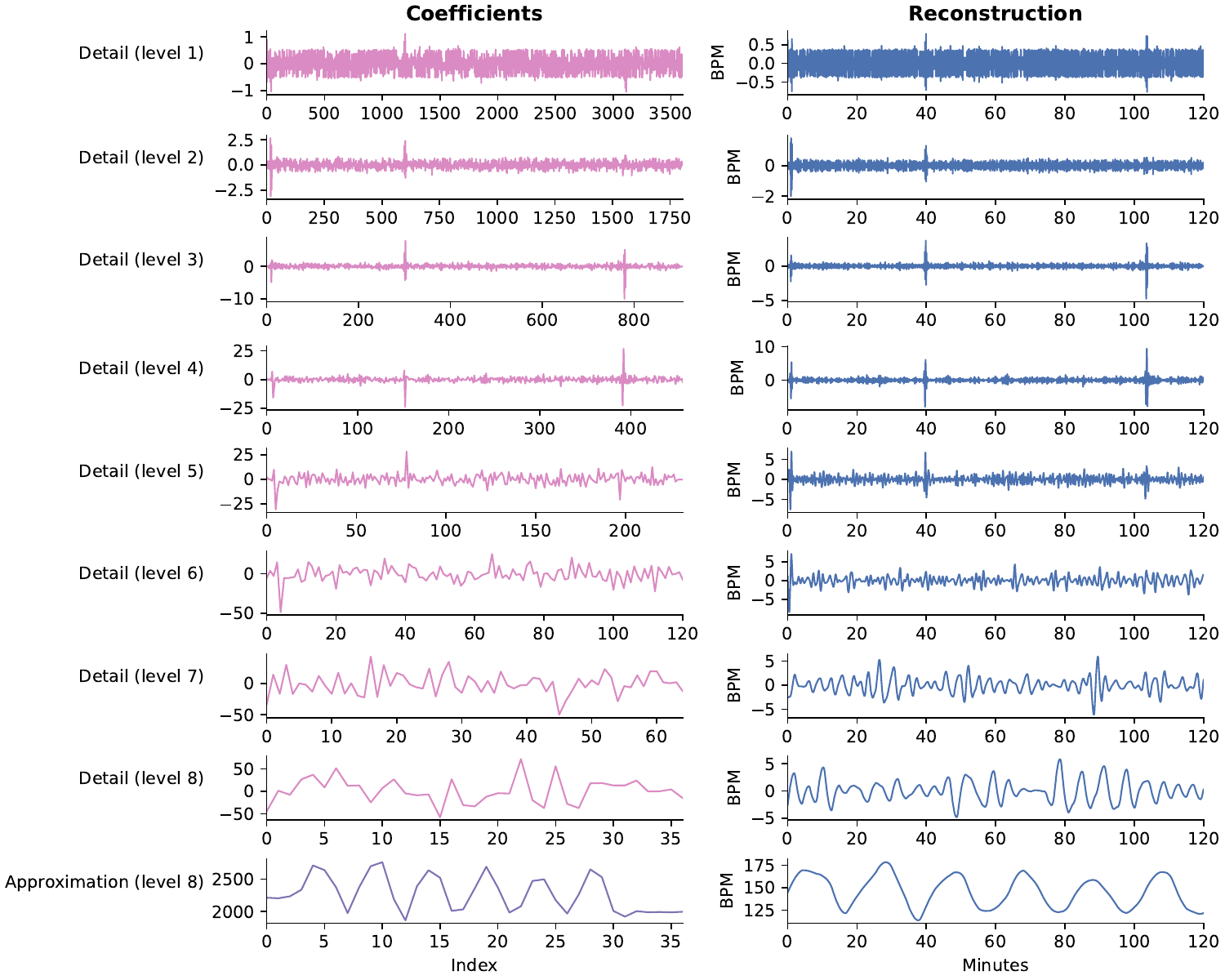}
        \caption{DWT decomposition}
        \label{fig:dwt decomposition - decomposition}
    \end{subfigure}
    \caption{
        An example of a multiresolution decomposition using DWT.
        Figure (\subref{fig:dwt decomposition - original}) shows the original, simulated time series; figure(\subref{fig:dwt decomposition - wavelet}) shows the fifth order symlet wavelet implicitly defined by the multiresolution analysis and figure (\subref{fig:dwt decomposition - decomposition}) shows the detail and approximation coefficients for the first eight decomposition levels. The decomposition was performed after extending the signal symmetrically at both ends to reduce end effects in the coefficients. In figure (\subref{fig:dwt decomposition - decomposition}), the coefficients (left) are decimated at each level so that each successive level has half as many coefficients. Their amplitude grows with the level of the decomposition because the wavelet basis is scaled down by a factor of \(\sqrt{2}\) at each level. The reconstructions (right) show the contribution of each level to the original time series and summing these will recover the original time series.
    }
    \label{fig:dwt decomposition}
\end{figure*}

More generally, when we have a discretely sampled signal we must be careful not to trust wavelet coefficients corresponding to frequencies higher than the Nyquist frequency. This is especially important for the CWT, where the algorithm does not naturally prevent us from going beyond this. Unlike in Fourier analysis however, this is just a rule of thumb, since the concept of a single frequency is stronger for some families of mother wavelet than for others.

It is also common in a practical setting to extend the signal in some way to avoid end effects. By default we could just interpret the signal as zero outside the sampled interval. However, by extending it symmetrically, we can reduce end effects in the coefficients for each level.

Figure \ref{fig:dwt decomposition} exhibits the decomposition of a simulated heart rate using an orthogonal wavelet called the `symlet' of order 5.

Finally, we note that we can relax the condition on the basis \(\{\phi_{-1, n} : n \in \mathbb{Z}\}\) being orthonormal \cite[section 7.4]{mallat2009wavelets}. If instead they simply form a Riesz basis for \(V_0\) then the wavelet basis we obtain for \(L^2(\mathbb{R})\) is not orthonormal, but coefficients in this basis can be calculated using the dual basis, as before. Such wavelet bases are often called \emph{biorthogonal}. The filter banks for the corresponding multiresolution analysis contain four filters: one pair for analysis (deconstruction) and a second pair for synthesis (reconstruction).

\subsection{Spike detection using the continuous wavelet transform}
We can use the continuous wavelet transform to perform spike detection on our heart rate time series. As can be seen in figure \ref{fig:spike detection with cwt}, spikes manifest as ridges into the smaller scales (higher frequencies). Choosing an appropriate scale, we obtain a function of the translation parameter alone which we can threshold to classify spikes. As with the methods which threshold residuals of the heart rate time series, we can use either a constant threshold or an adaptive threshold here.

The choice of wavelet is important for performing spike detection using the CWT. A real-valued wavelet will give an oscillating pattern for each spike, making it difficult to know how many spikes there are and where they are located. Therefore, in this report we use a complex wavelet from the family of complex Morlet (or Gabor) wavelets. These have form
\begin{equation} \label{eqn: complex morlet wavelet}
    \psi(t) = \frac{1}{\sqrt{b \pi}} e^{-t^2/b}e^{2 \pi i c t}
\end{equation}
where the bandwidth, \(b\), and central frequency, \(c\), are parameters. An example is shown in figure \ref{fig:mother wavelet cts}. In this project, we used values of \(b = 1\,\mathrm{s}^2\) and \(c = 1 \,\mathrm{Hz}\).
The benefit of using a complex wavelet is that instead of oscillating either side of zero, the wavelet coefficients instead orbit zero in the complex plane. Therefore, taking the absolute value of the coefficients yields a single smooth hump for each spike.

After thresholding the CWT coefficients, we are left with several contiguous regions in which a spike occurs. We assign a spike to each of these regions by taking the midpoint.

\begin{figure}[htb]
    \centering
    \includegraphics[width=\linewidth]{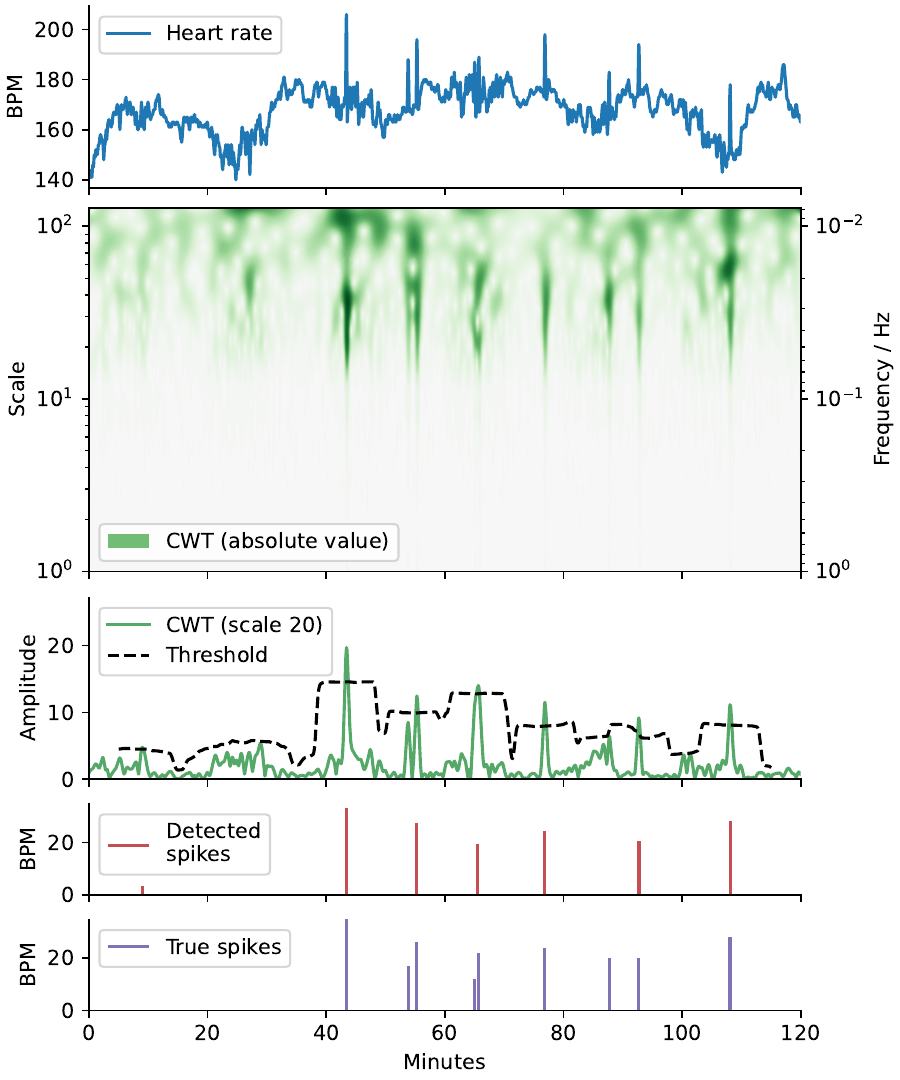}
    \caption{
        Example of spike detection with a continuous wavelet transform. First the simulated heart rate in the top plot is passed through a continuous wavelet transform using a complex Morlet wavelet with bandwidth \(2\,\mathrm{s}^2\) and central frequency \(1\,\mathrm{Hz}\) (see Fig.\ref{fig:mother wavelet cts}). Its absolute values are shown in the second plot. An adaptive threshold is then used on the CWT at scale \(20\,\mathrm{s}\) (\(0.05\,\mathrm{Hz}\)), shown in the third plot. The bottom two plots show that most of the simulated spikes were detected in this example, although three were missed. Furthermore, an additional spike was detected at the beginning of the time series, although this was assigned a very small height so will not greatly affect the error impact. 
    }
    \label{fig:spike detection with cwt}
\end{figure}

\subsection{Spike detection using the discrete wavelet transform} \label{SI: spike detection - dwt}
In the main text we described our principal method of spike detection, which proceeded by
\begin{enumerate}
    \item smoothing the original signal,
    \item subtracting the smoothed signal from the original,
    \item using a threshold (constant or adaptive) to identify spikes in the residuals.
\end{enumerate}
The discrete wavelet transform provides an alternative smoothing method for this process.

In \emph{signal denoising} \cite[chapter 11]{mallat2009wavelets}, we attempt to find a sparse representation for our signal. As we have already seen, using multiresolution analysis we can obtain a representation using a set of approximation and detail coefficients. As is visible in figure \ref{fig:dwt decomposition - decomposition}, not all of these detail coefficients are significant, and many could reasonably be regarded as noise. By shrinking the smaller coefficients to zero, we can obtain a sparse approximation to the original time series.

There are a few choices for the type of thresholding used to obtain the sparse representation. The most popular are the \emph{hard threshold}, \(T_\lambda(x)\), and the \emph{soft threshold}, \(S_\lambda(x)\), since they are most amenable to statistical risk analysis. However, in this project we used the \emph{non-negative garotte} \cite{gao1998garotte}, \(G_\lambda(x)\), which has desirable features from both. The three methods are given by
\begin{subequations}
\begin{allowdisplaybreaks}
\begin{align}
    T_\lambda(x) &= \begin{cases}
        x & \text{if } |x| \geq \lambda, \\
        0 & \text{if } |x| < \lambda;
    \end{cases} \\
    S_\lambda(x) &= \begin{cases}
        x - \lambda & \text{if } x \geq \lambda, \\
        x + \lambda & \text{if } x \leq -\lambda, \\
        0 & \text{if } -\lambda < x < \lambda;
    \end{cases} \\
    G_\lambda(x) &= \begin{cases}
        x - \lambda^2/x & \text{if } |x| \geq \lambda, \\
        0 & \text{if } |x| < \lambda.
    \end{cases}
\end{align}
\end{allowdisplaybreaks}
\end{subequations}

\subsection{Handling missing data}
One problem encountered when implementing the spike detection method on large sets of heart rate data provided by Crickles was that some recorded time series have portions of missing data, ranging between minutes to hours within a single recorded activity, resulting in unrealistic results, such as detecting an implausibly large amount of spikes.

To handle these issues, dummy heart rate readings were filled into the regions with missing data by linearly interpolating between the two points where the heart rate was last and next recorded, so that there are now heart rate readings every second in the time series. This was done before the heart rate time series was smoothed with a moving average. After this interpolated time series was smoothed, the regions where heart rate readings were added in for interpolation were removed so that the final smoothed version was aligned to the original heart rate data. The residuals were then obtained and a threshold was set on the residuals to detect spikes as explained before. This method was used instead of simply removing the empty regions from the data set as it ensured that sections of recorded heart rate separated by missing regions were unrelated, and not treated as one continuous activity. 

\FloatBarrier
\section{Example simulated time series} \label{SI: example simulations}
\setcounter{figure}{0}

We show here the potential surface used to simulate the base heart rate in Eqn.\ref{eqn: base heart rate SDE}. Following this we give two examples of simulated timeseries. 

\begin{figure}[ht]
    \centering
    \includegraphics[width=\linewidth]{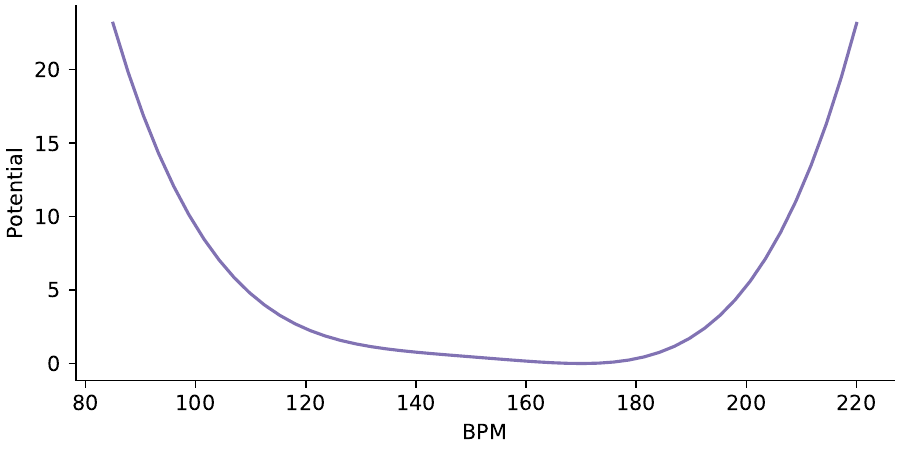}
    \caption{Example of a fixed potential surface as a function of heart rate. The asymmetry in the potential surface ensures that the heart rate does not frequently exceed a natural maximum, while allowing excursions to lower heart rates.}
    \label{fig:potential surface}
\end{figure}

\begin{figure}[ht]
    \centering
    \begin{subfigure}{\linewidth}
        \centering
        \includegraphics[width=\linewidth]{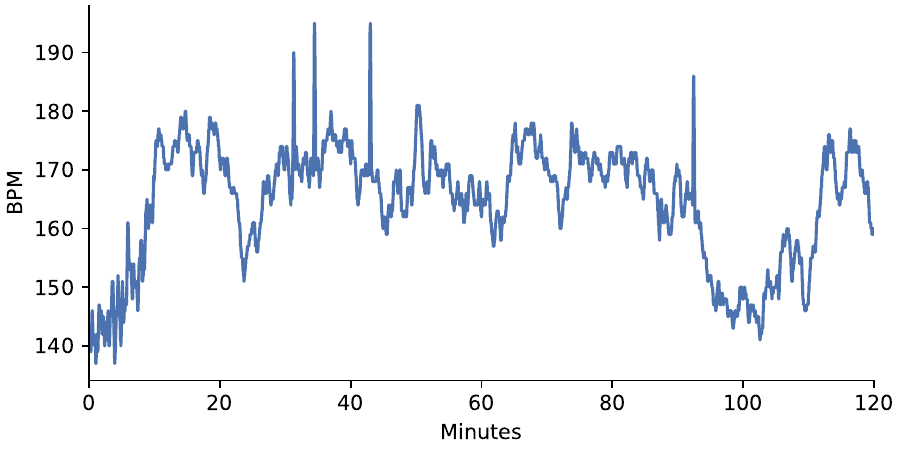}
        \caption{Random activity pattern}
        \label{fig:example simulated time series (constant)}
    \end{subfigure}
    \begin{subfigure}{\linewidth}
        \centering
        \includegraphics[width=\linewidth]{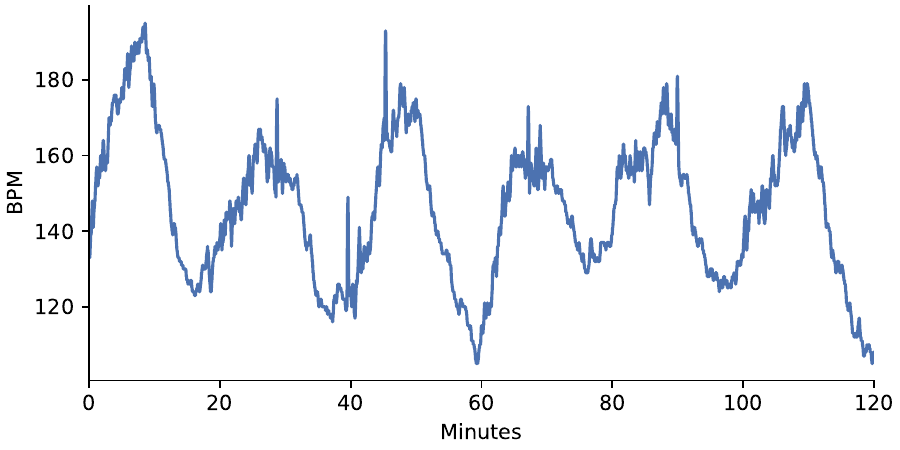}
        \caption{Interval training}
        \label{fig:example simulated time series (interval)}
    \end{subfigure}
    \caption{
        Example simulated time series with infrequent spikes. In (a) a constant potential surface is used, with additional noise during the first 10 minutes. In (b) the minimum of the potential surface is used to generate an interval training pattern. Additional noise is added in every other 10 minute section, beginning with the first 10 minutes.
    }
    \label{fig:example simulated time series}
\end{figure}

\FloatBarrier
\section{Poisson inference} \label{SI: Poisson}
\setcounter{figure}{0}

We describe here the likelihood ratio test used to consider the following hypotheses: 
\begin{itemize}
    \item $H_0$: spikes occur according to a Poisson process with constant intensity.
    
    \item $H_1$ spikes occur according to a Poisson process whose intensity is a piecewise constant function of the heart rate. 
\end{itemize}

We begin by calculating the MLE under $H_0$. Denote the total number of spikes across all an individual's time series by $N$, and the total length of the time series by $T$. The log-likelihood function under $H_0$ is 
\begin{equation}
    \ell_0(\lambda_0) = N\log(\lambda_0) - \lambda_0 \,T
\end{equation}
which is maximised by 
\begin{equation}
    \hat{\lambda_0} = \frac{N}{T}
\end{equation}
assuming of course that $T>0$.

We now consider the model under the alternative hypothesis. Denoted by $X(t)$ a smoothed heart rate time series with spikes removed. Divide a range of possible heart rates into $k$ intervals $[X_0,X_1),[X_1,X_2),\dots,[X_{k-1},X_k]$ with $X_0<X_1,\dots,X_k$. Define $N_i$ to be the number of spike events that are detected across the time series while $X_t\in[X_{i-1},X_i)$. Additionally define $T_i$ to be the amount of time for which $X_t\in[X_{i-1},X_i)$. Formally
\begin{align}
    N_i &= \sum_{j=1}^N \mathds{1}\{{X_t\in[X_{i-1},X_i)}\} \\
    T_i &= \int_0^T \mathds{1}\{{X_t\in[X_{i-1},X_i)}\} \, \text{dt}
\end{align}
The log-likelihood function is then
\begin{equation}
    \ell(\lambda_1,\dots,\lambda_k) = 
    \sum_{i=1}^k N_i \log(T_i) - T_i \lambda_i
\end{equation}
which can be maximised by maximising each component of the sum. This gives the MLE 
\begin{equation}
    (\hat{\lambda_1},\dots,\hat{\lambda_k}) = \bigg(\frac{N_1}{T_1},\dots,\frac{N_k}{T_k}\bigg)
\end{equation}
As the case of a homogeneous Poisson process is a restriction of this more complex model, the more complex model will always provide a `better' fit to the data. We wish to test if there is a significant relationship between the intensity of the Poisson process and the heart rate by determining if this more complex models provides a statistically significantly better fit.

\begin{remark}
In theory an arbitrary number of intervals could be included at physically unrealistic heart rates that would never be attained. To therefore avoid increasing $k$ unnecessarily and encountering issues in the definition of MLEs we assume that all $T_i>0$ and discount/merge any intervals for which this condition fails. 
\end{remark}

The test statistic is given by 
\begin{align*}
    D 
    &= 2\big[\ell(\hat{\lambda}) - \ell_0(\hat{\lambda_0})\big] \\
    &= 2\bigg[ \sum_{i=1}^k \Big(N_i\log\Big(\frac{N_i}{T_i}\Big) - 1\Big) - N\Big(\log\Big(\frac{N}{T}\Big) - 1\Big) \bigg]
\end{align*}
This is compared against a chi-squared distribution with $k-1$ degrees of freedom to test for significance. 

\FloatBarrier
\section{Spike density error} \label{SI: Spike density error}
\setcounter{figure}{0}

Some basic properties of the spike density error:
\begin{itemize}
    \item As it is the integral of a sum of non-negative functions it is always non-negative. 
    \item As there is no theoretical limit on the size or number of spikes the error is unbounded. 
    \item Errors increase if relatively larger spikes are missed or erroneously detected. 
\end{itemize}

We now present a proposition that demonstrates the relationship between accuracy and error behaves as expected. 

\begin{proposition}
Consider a situation in which there is only one true spike and one detected spike of equal size (assumed to be 1 for simplicity) with the detected spike a distance $h$ from the true spike. Let $s$ be the time of the true spike and $s+h$ the time of the detected spike. Assume the density kernel $f$ has finite support and that the location of the spikes are sufficiently far from the beginning or end of the time series that the supports of $f_s$ and $f_{s+h}$ lie entirely within $[0,T]$.
In this situation the spike density error decreases monotonically as $h$ decreases and is given by
\begin{equation} \label{eqn: sde prop}
    \epsilon(h) = \frac{2}{T} \int\limits_{-\frac{h}{2}}^{\frac{h}{2}} f(t) \,dt = \frac{4}{T} \int\limits_{0}^{\frac{h}{2}} f(t) \,dt
\end{equation}
\end{proposition}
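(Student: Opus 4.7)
The plan is to reduce the spike density error to a single one-dimensional integral in the variable separating the two spikes, then exploit the symmetry (and effective unimodality) of $f$ to compute it explicitly, with monotonicity in $h$ following as a trivial consequence.

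First I would use the standing assumption that the supports of $f_s$ and $f_{s+h}$ lie inside $[0,T]$ to deduce that the normalisation in \eqref{eqn: scaled kernel density} is simply $\int_{\mathbb{R}} f = 1$, so $f_s(t) = f(t-s)$ and $f_{s+h}(t) = f(t-s-h)$. With both heights equal to $1$, the definition \eqref{eqn: spike density error} becomes
$$\epsilon(h) = \frac{1}{T} \int_0^T |f(t-s) - f(t-s-h)|\, dt,$$
and since the integrand is compactly supported strictly inside $[0,T]$, I can extend to all of $\mathbb{R}$ and shift $u=t-s$ to obtain $\epsilon(h) = \frac{1}{T} \int_{\mathbb{R}} |f(u) - f(u-h)|\, du$. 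Next I would centre the argument by substituting $v = u - h/2$, giving $\epsilon(h) = \frac{1}{T} \int_{\mathbb{R}} |f(v+h/2) - f(v-h/2)|\, dv$. Using the symmetry $f(-x) = f(x)$, the integrand is even in $v$, so the integral over $\mathbb{R}$ is twice that over $[0, \infty)$. For $v \geq 0$ one has $|v-h/2| \leq v+h/2$; combined with the fact that $f$ is non-increasing in $|x|$ (true of the Gaussian kernel used in the paper), symmetry then gives $f(v-h/2) = f(|v-h/2|) \geq f(v+h/2)$, which removes the absolute value. Translating by $\pm h/2$ in each of the resulting two terms collapses everything to
$$\epsilon(h) = \frac{2}{T}\bigg[\int_{-h/2}^{\infty} f(w)\,dw - \int_{h/2}^{\infty} f(w)\,dw\bigg] = \frac{2}{T} \int_{-h/2}^{h/2} f(w)\, dw,$$
and the second equality in \eqref{eqn: sde prop} is immediate from symmetry. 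Monotonicity of $\epsilon$ in $h$ then drops out for free because $f \geq 0$, so enlarging $[-h/2, h/2]$ can only increase the integral.

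The main technical obstacle is the sign analysis that allows the absolute value to be dropped on $[0, \infty)$. Conditions (1)--(3) alone on $f$ do not force $f$ to be non-increasing on $[0, \infty)$: one can build a symmetric $f$ with global maximum at $0$ (for example a piecewise-constant kernel with a tall central plateau and lower side plateaus) for which the stated identity fails at moderate $h$. The proof therefore must either invoke the unimodality of the Gaussian kernel actually used in the implementation, or implicitly strengthen the hypothesis on $f$ to `non-increasing in $|x|$'; this is where the argument is most easily overlooked, and it is the step I would flag explicitly.
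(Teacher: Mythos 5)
Your proof is correct and takes essentially the same route as the paper's: both reduce the error to $\tfrac{1}{T}\int|f(u)-f(u-h)|\,du$ via the support assumption, locate the sign change of the integrand at the midpoint $s+\tfrac{h}{2}$ using the monotonicity of $f$ on either side of the origin, and telescope the resulting integrals (the paper splits explicitly at $t^{*}=s+\tfrac{h}{2}$ where you recentre by a change of variables, a purely cosmetic difference). Your closing observation is well taken: conditions (1)--(3) do not force $f$ to be non-increasing on $[0,\infty)$, and the paper's own proof simply asserts that ``$f$ is increasing below $0$'' without deriving it from the listed kernel conditions, so unimodality is an unstated hypothesis in both arguments.
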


\begin{proof}
Assume $h>0$. Recall the definition of the shifted and scaled density kernel (Eqn.\ref{eqn: scaled kernel density})
\begin{equation*}
    f_s (t) = \frac{f(t-s)}{\int_0^T f(\tau-s)\,d\tau}
\end{equation*}
Using assumptions about the spikes and density kernel this reduces to
\begin{equation*}
    f_s (t) = f(t-s)
\end{equation*}
Then define
\begin{align*}
    g(t) &= f_s(t) - f_{s+h}(t) \\
    &= f_s(t) - f_s(t-h)
\end{align*}
As the density kernel $f$ is increasing below $0$, $$f_s(t)\geq f_{s}(t-h)\,\forall\, t<s$$ and so $$g(t)\geq 0\,\forall\, t<s.$$
Similarly since $f$ is decreasing above $0$,
$$g(t)\leq 0\,\forall\, t>s+h.$$
In the region $s<t<s+h$, $f_s(t)$ is decreasing and $f_{s+h}(t)$ is increasing, hence there exists a point $t^{*}\in(s,s+h)$ such that $g(t^{*})=0$, $g(t)\geq0\,\forall\,t<t^{*}$ and $g(t)\leq0\,\forall\,t>t^{*}$. As $f$ is not required to be strictly increasing, such a point may not be unique. Since 
\begin{align*}
    f_s\bigg(s+\frac{h}{2}\bigg) 
    &= f_s\bigg(s-\frac{h}{2}\bigg) \\
    &= f_{s+h}\bigg(s-\frac{h}{2}-h\bigg) \\
    &= f_{s+h}\bigg(s+\frac{h}{2}\bigg)
\end{align*}
$t^{*}=s+\frac{h}{2}$ is a suitable point. 

Recall the definition of the spike density error (Eqn.\ref{eqn: spike density error})
\begin{equation*} 
    \epsilon = \frac{1}{T} \int_0^T \big|\rho_r(t) - \rho_d(t) \big|\,dt
\end{equation*}
We have that 
\begin{equation*}
    \big|\rho_r(t) - \rho_d(t) \big| = 
    \begin{cases}
    f_s(t) - f_{s+h}(t) \quad &\text{if $t<t^{*}$} \\
    0 \quad &\text{if $t=t^{*}$} \\
    f_{s+h}(t) - f_s(t) \quad &\text{if $t>t^{*}$}
    \end{cases}
\end{equation*}
hence
\begin{align*}
    \epsilon 
    =& \frac{1}{T} \int_0^T \big|\rho_r(t) - \rho_d(t) \big|\,dt \\
    =& \frac{1}{T}\int_0^{t^{*}} f_s(t) - f_{s+h}(t)\,dt + \frac{1}{T}\int_{t^{*}}^T f_{s+h}(t) - f_s(t)\,dt \\
    =& \frac{1}{T}\int_0^{t^{*}} f_s(t)\,dt - \frac{1}{T}\int_0^{t^{*}} f_s(t-h)\,dt \\
    &+ \frac{1}{T}\int_{t^{*}}^T f_s(t-h)\,dt - \frac{1}{T}\int_{t^{*}}^T f_s(t)\,dt \\
    =& \frac{1}{T}\int_0^{t^{*}} f_s(t)\,dt - \frac{1}{T}\int_0^{t^{*}-h} f_s(t)\,dt \\
    &+ \frac{1}{T}\int_{t^{*}-h}^T f_s(t)\,dt - \frac{1}{T}\int_{t^{*}}^T f_s(t)\,dt \\
    =& \frac{2}{T} \int_{t^{*}-h}^{t^{*}+h} f_s(t) \,dt
\end{align*}
Using $t^{*}=s+\frac{h}{2}$ then gives Eqn.\ref{eqn: sde prop}. As $f\geq0$ decreasing $h$ decreases the error. The same argument can be made for $h<0$ using the symmetry of $f$.

\end{proof}

\FloatBarrier
\section{Distribution of spike in heart rate} \label{SI: spikes at heart rate}
\setcounter{figure}{0}

\begin{figure}[H]
    \centering
    \includegraphics[width=\linewidth]{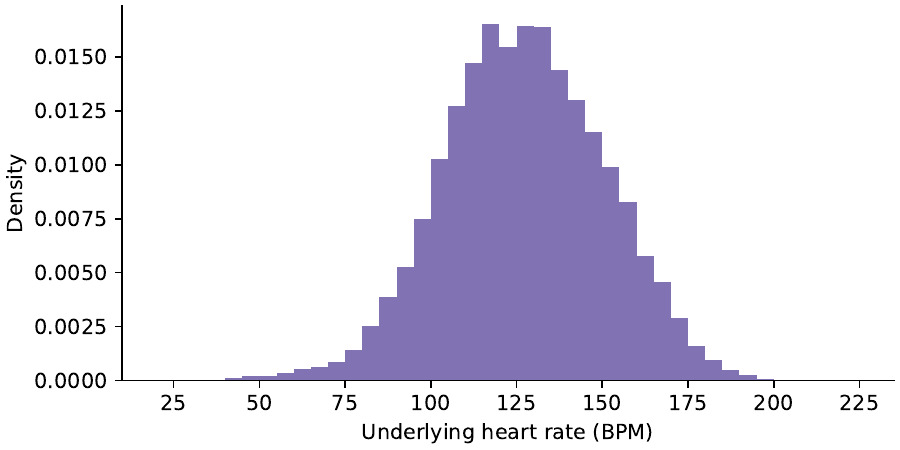}
    \caption{Histogram showing the distribution of heart rate where spikes were detected in the real heart rate data from Crickles. In order to account for different numbers of activities and spikes for different athletes, a separate histogram was generated for each athlete, and these histograms were then averaged to give the histogram here. The distribution shows a normal distribution-like curve, where the majority of the spikes were detected between ranges $105-140$ bpm.}
    \label{fig:distribution of spikes in hr}
\end{figure}

\FloatBarrier
\section{Distribution of spike times} \label{SI: spike times}
\setcounter{figure}{0}

\begin{figure}[H]
    \centering
    \includegraphics[width=\linewidth]{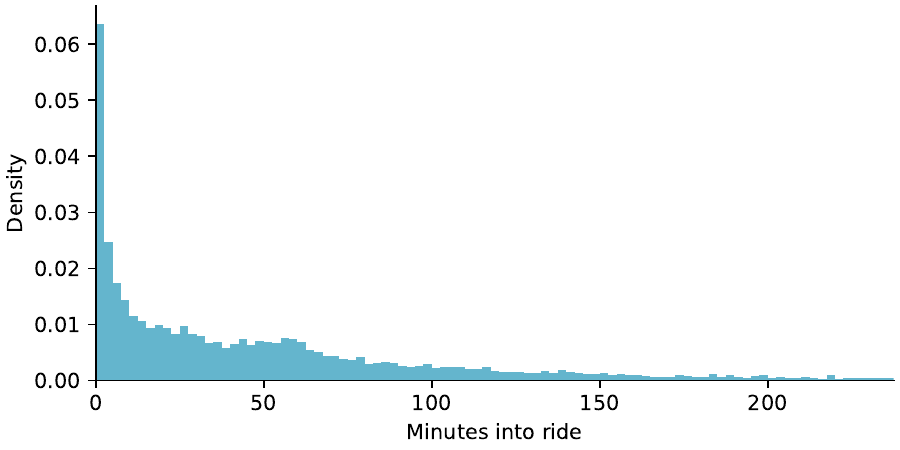}
    \caption{Histogram showing the distribution of times at which spikes were detected in the real heart rate data from Crickles. In order to account for different numbers of activities and spikes for different athletes, a separate histogram was generated for each athlete, and these histograms were then averaged to give the histogram here. This shows that the majority of spikes were detected within the first 10 minutes of the time series, which may be due to do the failure of sports devices to record an accurate heart rate during the start of an activity as discussed in section \ref{sec: further work}. }
    \label{fig:distribution of spike times}
\end{figure}

\end{document}